\theoremstyle{plain}
\newtheorem{theorem}{Theorem}
\newtheorem{lemma}[theorem]{Lemma}
\newtheorem{corollary}[theorem]{Corollary}
\newenvironment{proofof}[1]{ {\noindent \em Proof of #1.}\/}{\hfill\qedsymbol\bigskip}
\newenvironment{proofsketch}{ {\noindent \em Proof sketch.}\/}{\hfill\qedsymbol\bigskip}
\newcommand{\remove}[1]{}
\newcommand{\suppress}[1]{}
\newcommand{\NN}{{\mathbb{N}}}
\newcommand{\RR}{{\mathbb{R}}}
\DeclareMathOperator{\detr}{det}
\DeclareMathOperator{\rand}{rand}
\begin{document}

\title{Parametrized Metrical Task Systems}

\author{S\'ebastien Bubeck\thanks{Microsoft Research, Redmond, United States, {\tt sebubeck@microsoft.com}} \and
Yuval Rabani\thanks{The Rachel and Selim Benin School of Computer
Science and Engineering, The Hebrew University of Jerusalem,
Jerusalem 9190416, Israel, {\tt yrabani@cs.huji.ac.il}.
Research supported in part by ISF grants 956-15 and 2553-17 and by
BSF grant 2012333. Part of this work was done while visiting Microsoft
Research at Redmond.}}

\maketitle

\begin{abstract}
We consider parametrized versions of metrical task systems and metrical service systems, two
fundamental models of online computing, where the constrained parameter is the number of 
possible distinct requests $m$. Such parametrization occurs naturally in a wide range of
applications. Striking examples are certain power management problems, which are modeled 
as metrical task systems with $m=2$. We characterize the competitive ratio in terms of the parameter
$m$ for both deterministic and randomized algorithms on hierarchically separated trees.
Our findings uncover a rich and unexpected picture that differs substantially from what is
known or conjectured about the unparametrized versions of these problems. For metrical
task systems, we show that deterministic algorithms do not exhibit any asymptotic gain 
beyond one-level trees (namely, uniform metric spaces), whereas randomized algorithms 
do not exhibit any asymptotic gain even for one-level trees. In contrast, the special case
of metrical service systems (subset chasing) behaves very differently. Both deterministic
and randomized algorithms exhibit gain, for $m$ sufficiently small compared to $n$, 
for any number of levels. Most significantly, they exhibit a large gain for uniform metric spaces 
and a smaller gain for two-level trees. Moreover, it turns out that in these cases (as well as
in the case of metrical task systems for uniform metric spaces with $m$ being an absolute 
constant), deterministic algorithms are essentially as powerful as randomized algorithms. 
This is surprising and runs counter to the ubiquitous intuition/conjecture that, for most
problems that can be modeled as metrical task systems, the randomized competitive ratio 
is polylogarithmic in the deterministic competitive ratio.
\end{abstract}

\thispagestyle{empty}
\newpage
\setcounter{page}{1}

\section{Introduction}

\paragraph{Motivation.}
Metrical task systems are a general framework proposed in the early
days of online computing to model a wide range of problems. In this
model, an algorithm controls its state among $n$ possible states. It
is presented with a sequence of requests. Each request indicates the
cost of serving it in each of the $n$ states. The algorithm must choose
a state to serve the request, then pay both the transition cost and the
service cost. It competes against the optimal offline solution, and is
$c$-competitive iff for any request sequence it pays at most $c$ times
this yardstick (plus an allowed additive term independent of the request
sequence). The transition costs are assumed to be distances in a finite
metric space whose points are the states. The seminal paper~\cite{BLS92}
introduces this model and also solves completely a fundamental problem 
regarding the proposed model. It gives a tight $2n-1$ upper and lower
bound on the competitive ratio of any deterministic algorithm. The paper
also raises the same question regarding randomized algorithms. For a
uniform metric space, it gives a lower bound of ${\cal H}_n$ and a nearly
tight upper bound of $2{\cal H}_n$, where 
${\cal H}_n = 1+\frac 1 2+\cdots+\frac 1 n$ is the $n$-th harmonic number.
In light of this result, a natural folklore conjecture asserts that in any metric
space the randomized competitive ratio is $\Theta(\log n)$. This conjecture
turned out to be far more elusive than the deterministic case. Currently, the
best universal bounds are a lower bound of 
$\Omega(\log n/\log\log n)$~\cite{BBM01,BLMN03} and a very recent upper 
bound of $O(\log^2 n)$~\cite{BCLL18}.
\newline

Many online problems can be modeled as special cases of metrical task
systems. For instance, paging~\cite{ST85} and more generally the 
$k$-server problem~\cite{MMS90} are task systems on the state space
of $k$-tuples of points in an underlying metric space. But the best bounds
that are known for these problems are much lower than the general bounds
that apply to any metrical task system. Clearly, there are constraints on the
allowed requests. For instance, in the $k$-server problem in an $n$-point
metric space, the state space has size ${n\choose k}$, but the number of
possible requests is just $n$, and they have only $0$ or $\infty$ values.
Such constraints often characterize concrete applications of metrical task
systems (e.g., various problems in networks, see~\cite{BCI97} and the
references therein). A striking 
example of the gain from instances with a restricted set of requests is the 
power management problem~\cite{ISG03}. It allows only two possible requests, 
and admits an $O(1)$-competitive algorithm~\cite{AIS08}. In view of these 
examples, it is natural to seek a meaningful measure of instance complexity 
that will provide at least partial prediction of the better bounds in special cases, 
a prediction that may be applied to new problems. An obvious choice is the 
cardinality $m$ of the set of possible requests. (Clearly, this question is interesting 
only for $m\ll n$, otherwise the known general lower bounds kick in.) Examining 
request sequences drawn from a given set is proposed explicitly in~\cite{BI97}. 
They give a deterministic algorithm that in uniform metric spaces achieves a 
competitive ratio of $O(\log n)$ times the (unknown) best possible bound.
\newline

Motivated by the $k$-server example, the restriction of the requests to have
costs in $\{0,\infty\}$ is a natural and attractive problem of chasing subsets
of points in a finite metric space. Indeed, this problem setting was proposed
in~\cite{CL93}, dubbed metrical service systems. We note that this is part
of a broader theme of online chasing and searching, and other seminal
papers in this theme include~\cite{PY89,BCR93,FL93}. The focus of the metrical
service systems literature (e.g.~\cite{CL93,FFKRRV91,Ram93,Bur96,Feu98,CV15}) 
is the {\em width} of the requests, which is the maximum size of a requested set.
In view of the above discussion, it is equally natural to ask about chasing 
sets drawn from a small-sized pool of possible requests and deriving bounds
that are independent of the width.

\paragraph{Our results.}
Our main contribution is an explicit analysis of the competitive ratio of metrical
task systems and metrical service systems with $m$ possible requests, $m\ll n$. 
Perhaps surprisingly, we show that unlike the general case, the answer here 
depends on the metric space. In particular, for metrical task systems we show 
that in uniform metric spaces, restricting thus the request sequence improves 
dramatically the deterministic competitive ratio, which we characterize tightly, up 
to constant factors. In contrast, there exist other metric spaces, namely two-level 
HSTs (uniform metrics are one-level HSTs), that exhibit no deterministic asymptotic 
improvement, even if $m=2$. Perhaps even more surprisingly, in uniform metric 
spaces the randomized competitive ratio does not improve asymptotically even if 
$m=2$, and in this case the deterministic competitive ratio is asymptotically the 
same as the randomized competitive ratio.
\newline

We also analyze unrestricted width metrical service systems with $m$ possible
requests in uniform metric spaces versus two-level HSTs. This case, too,
exhibits complex and fascinating behavior, and furthermore it is not identical to
the general case. In sufficiently large uniform metric spaces the restriction to $m$
requests improves both the deterministic and the randomized competitive ratio.
However the two ratios are asymptotically the same, linear in $m$ and independent
of the size of the metric. The two-level HSTs case also deviates from the general
case. As with uniform metric spaces, both the deterministic and the randomized
competitive ratios improve and are independent of the size of the metric space.
However, both the deterministic and the randomized competitive ratios are now
exponential in $m$ (the base of the exponent may be different in the two cases).
We discuss the surprising aspects of these results and their features that persist
beyond two levels in the following sections.
\newline

We note that our results, most strikingly the bounds for $m=2$ metrical task
systems in uniform metric spaces, refute a folklore conjecture that for most
problems that can be cast as metrical task systems, the randomized competitive
ratio is polylogarithmic in the deterministic competitive ratio.
\newline

Finally, we relate some of our results to the question of quantifying the amount
of information an adversary needs in order to force an online algorithm to perform
poorly.

\section{Definitions and Results}

A {\em metrical task system} instance consists of a finite metric space ${\cal M} = (X,d)$,
an initial state $s_0\in X$, and a sequence of requests $\rho_1,\rho_2,\dots,\rho_L$,
where for every $t\in\{1,2,\dots,L\}$, the request $\rho_t$ is a cost function that
maps $X$ to $\RR_+ \cup \{+\infty\}$. We will denote $n \triangleq |X|$. A solution consists of a 
choice of states $s_1,s_2,\dots,s_L\in X$, which incurs a total cost of
$$
\sum_{t=1}^L \left(d(s_{t-1},s_t) + \rho_t(s_t)\right).
$$
A deterministic online algorithm chooses, for every $t=1,2,\dots,L$, the next state 
$s_t$ based only on ${\cal M}$ and $\rho_1,\rho_2,\dots,\rho_t$, without knowing 
the suffix of the requests sequence or even its length $L$. A randomized online
algorithm is a probability distribution over deterministic online algorithms, and its
cost is defined to be the expectation of the cost of the deterministic algorithms in
the support of the distribution.

We use the following notation. For a sequence of requests $\rho$, let $z^*(\rho)$
denote the optimal cost of serving $\rho$. Define $c^{\detr}_{\cal M}$ to be the 
infimum over $c$ such that there exists a constant $a$ and a deterministic online 
algorithm $A$ such that for every sequence of requests $\rho$, the cost of $A$ of
serving $\rho$ is at most $c\cdot z^*(\rho) + a$. Define similarly $c^{\rand}_{\cal M}$
for randomized algorithms. The following theorem is well-known:
\begin{theorem}[Borodin et al.~\cite{BLS92}]\label{thm: det}
For every finite metric space ${\cal M}$,
$$c^{\detr}_{\cal M} = 2n-1.$$
\end{theorem}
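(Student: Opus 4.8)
The plan is to prove the two bounds $c^{\detr}_{\cal M}\le 2n-1$ and $c^{\detr}_{\cal M}\ge 2n-1$ separately, both organized around the \emph{work function}. For a request prefix $\rho_1,\dots,\rho_t$ let $w_t(x)$ be the least cost of serving it starting at $s_0$ and ending in state $x$; then $w_0(x)=d(s_0,x)$, $w_t(x)=\min_{y\in X}\left(w_{t-1}(y)+\rho_t(y)+d(y,x)\right)$, and $z^*(\rho)=\min_x w_L(x)$. I would begin by recording the standard elementary properties: each $w_t$ is $1$-Lipschitz with respect to $d$; $t\mapsto w_t(x)$ is nondecreasing; $0\le w_t(x)-w_{t-1}(x)\le\rho_t(x)$ for every $x$; and the ``support'' fact that among the states $s$ minimizing $w_t(s)+d(s_{t-1},s)$ there is one with $w_t(s)=w_{t-1}(s)+\rho_t(s)$.

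For the upper bound I would analyze the Work Function Algorithm, which from state $s_{t-1}$ moves to such a minimizing state $s_t$. The Lipschitz property together with the defining minimization gives $d(s_{t-1},s_t)=w_t(s_{t-1})-w_t(s_t)$, so the algorithm's step-$t$ cost equals $d(s_{t-1},s_t)+\rho_t(s_t)=w_t(s_{t-1})-w_{t-1}(s_t)$. The core is an amortized analysis with a potential $\Phi_t$ formed as a fixed nonnegative linear combination of the quantities $\{\,w_t(x)-w_t(s_t):x\in X\,\}$, so that $\Phi_t\ge 0$ and $\Phi_0$ depends only on ${\cal M}$, arranged so that the step-$t$ cost plus $\Phi_t-\Phi_{t-1}$ is at most $(2n-1)\left(\min_x w_t(x)-\min_x w_{t-1}(x)\right)$; summing over $t$ and using $\min_x w_0(x)=0$ and $\Phi_L\ge 0$ yields cost at most $(2n-1)\,z^*(\rho)+\Phi_0$. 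Verifying this per-step inequality is the real work, via a case analysis on how the set of minimizers of $w_t$ shifts and on which states saw their work function rise, using the elementary properties above and the precise coefficients of $\Phi$. (One could alternatively cite the known fact that the Work Function Algorithm is $(2n-1)$-competitive on every finite metric space.)

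For the lower bound I would run, against an arbitrary deterministic algorithm $A$, the adaptive ``cruel'' adversary that at each step issues the request charging a small fixed amount $\varepsilon>0$ at $A$'s current state and $0$ elsewhere, over a long horizon $L$. Since $z^*\to\infty$ as $L\to\infty$, any $A$ that is $c$-competitive with additive constant $a$ satisfies $(\text{cost of }A)\le c\,z^*+a$ along this family, so it suffices to drive $(\text{cost of }A)/z^*$ towards $2n-1$. The cost of $A$ is $\varepsilon L$ plus its total transition cost. One argues first that $A$ must keep moving (otherwise the offline optimum parks forever at a state never charged again, paying $O(1)$ while $A$ pays $\Theta(\varepsilon L)$), and in fact must keep revisiting every state frequently (otherwise the optimum again finds a cheap ``haven''). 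To bound $z^*$ from above I would use two offline strategies: a static one that sits at the least-charged state, costing at most $\varepsilon L/n+O(1)$; and a ``dodging'' one that always occupies a state not about to be charged, whose cost is controlled by how often $A$'s trajectory returns to a state. The quantitative heart is to partition the sequence into phases --- minimal windows over which the cheaper of the two offline strategies accrues a fixed budget --- and show that within each phase $A$ necessarily pays at least $2n-1$ times that budget; chaining phases gives $\liminf_{L}(\text{cost of }A)/z^*\ge 2n-1$.

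I expect the main obstacle to be exactly this phase accounting in a \emph{general}, non-uniform metric: one must show that $A$'s unavoidable dilemma --- pay $\varepsilon$ per step by staying, or pay genuine transition cost to relocate, all while being forced to cover every state so that the static offline strategy cannot be made cheap --- can never be resolved at a rate better than $2n-1$, and extracting the exact constant $2n-1$ rather than just $\Theta(n)$ from this trade-off requires delicate bookkeeping tied to the geometry of ${\cal M}$. On the upper-bound side, the analogous sticking point is purely the verification of the amortized inequality for the chosen potential.
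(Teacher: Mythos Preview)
The paper does not prove this theorem; it is quoted as a known result from Borodin, Linial, and Saks~\cite{BLS92} and used as background. So there is no ``paper's own proof'' to compare your proposal against.

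As to your sketch itself: the upper bound via the Work Function Algorithm is a valid route (though not the one in~\cite{BLS92}, which uses a traversal-based algorithm built from a minimum spanning tree), and your outline of the amortization is standard. Your lower bound, however, has a real gap. The cruel-adversary setup is right, but your plan to extract the exact constant $2n-1$ via ``two offline strategies'' (static and dodging) and phase accounting is not how the bound is obtained and, as you yourself flag, you do not see how to make that bookkeeping yield $2n-1$ in a general metric. The actual argument in~\cite{BLS92} does not compare to two ad hoc offline strategies; it constructs a family of $2n-1$ offline trajectories (based on traversals of a minimum spanning tree of ${\cal M}$, in both directions and from every starting point) whose \emph{total} cost on the adversarial sequence is at most the algorithm's cost, so one of them costs at most $1/(2n-1)$ of the algorithm's cost. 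That averaging-over-traversals idea is the missing ingredient in your plan, and without it your phase analysis will at best give $\Theta(n)$ rather than the tight $2n-1$.
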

It is also conjectured that $c^{\rand}_{\cal M} = \Theta(\log n)$. The conjecture has
been established in special cases. For instance tight bounds are known in uniform
metric spaces. The currently best known bounds (following substantial previous
work~\cite{KRR91,BKRS00,BI97,BB97,BBBT97,IS98,Sei99,FM00,BBN10,ABBS10})
that hold for every metric space are (see below for the definition of an HST):
\begin{theorem}[Bartal et al.~\cite{BBM01,BLMN03}]\label{thm: rand lower}
For every finite metric space ${\cal M}$,
$$c^{\rand}_{\cal M} = \Omega\left(\frac{\log n}{\log\log n}\right).$$
\end{theorem}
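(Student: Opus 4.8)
The plan is to reduce the problem to a lower bound on a structured metric space, namely a hierarchically separated tree (HST), and then lower bound the randomized competitive ratio there by an information-theoretic / potential argument. First I would invoke the metric embedding machinery: by Bartal's theorem (and its refinements in \cite{BLMN03}), every $n$-point metric space probabilistically embeds into a distribution over HSTs with distortion $O(\log n)$, but more to the point, for the lower bound one shows that \emph{every} $n$-point metric contains, in a suitable quantitative sense, a copy of a $k$-level $\Delta$-HST with branching factor $b$ where $b^k = n^{\Omega(1)}$; one picks $k \approx \log n / \log\log n$ and $b \approx \log n$ so that an adversary restricted to the HST substructure already forces the claimed ratio. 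So the first step is: \emph{it suffices to prove the bound when ${\cal M}$ is a well-separated $k$-level HST with branching factor $b$, for appropriate $k,b$.}

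Second, on the HST I would set up the adversarial request sequence recursively. At the top level the adversary behaves like the uniform-metric (MTS on the root's $b$ children) adversary, which by the classical ${\cal H}_n$-type argument forces a factor $\Omega(\log b)$ against any randomized algorithm — here one uses the standard ``cruel adversary'' that repeatedly charges the currently most-likely state, combined with a Yao's-principle / distributional lower bound. Crucially, because the tree is well separated, moving between subtrees at the top level is expensive, so recursing inside the subtree the adversary visits imposes an \emph{additive} (not multiplicative-lossy) penalty: the costs at level $i$ compose across the $k$ levels, yielding a total ratio of order $k \cdot \log b$ divided by the appropriate normalization. The bookkeeping that makes ``$k$ independent copies of an $\Omega(\log b)$ bound'' compose into $\Omega(k\log b)$ rather than $\Omega(\log b)$ is the technical heart: one tracks a potential comparing the algorithm's distribution over leaves of the current active subtree to the offline optimum, and shows the adversary can always descend into a subtree where the algorithm has spread its mass too thin.

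Third, I would optimize the parameters. With $b^k = n^{\Theta(1)}$, we have $k \log b = \Theta(\log n)$, which only gives $\Omega(\log n)$ — not better. The $\log\log n$ loss enters because to make the recursive composition lossless one needs each level to be ``$\Delta$-separated'' with $\Delta$ large enough (polynomial in the number of requests issued at lower levels), which forces $b$ to be polylogarithmic rather than polynomial in $n$; then $k = \Theta(\log n/\log\log n)$ and $\log b = \Theta(\log\log n)$, and the product $k\log b = \Theta(\log n)$ again — so in fact the bound obtained this way is $\Omega(\log n)$ on the HST itself, and the $1/\log\log n$ loss is incurred in the embedding step (the distortion of realizing the abstract HST inside an arbitrary metric, or equivalently in the \cite{BLMN03} Ramsey-type argument extracting a large HST substructure from an arbitrary $n$-point metric). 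The main obstacle, and the part I expect to be most delicate, is exactly this interface: proving that an arbitrary $n$-point metric space contains a metrical-task-system-hard substructure of the right shape with only a $\log\log n$ loss, i.e., the metric Ramsey / embedding lemma of \cite{BBM01,BLMN03}; the HST lower bound itself, while nontrivial, follows a by-now-standard recursive potential argument.
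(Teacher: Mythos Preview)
The paper does not prove Theorem~\ref{thm: rand lower}; it is quoted as a known result of Bartal et al.\ \cite{BBM01,BLMN03} and used only as background. So there is no ``paper's own proof'' to compare your proposal against.

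That said, your sketch does track the actual argument in those references reasonably well at a high level: (i) a recursive randomized lower bound on well-separated HSTs via an unfair-MTS / potential argument that composes across levels, and (ii) a metric Ramsey-type theorem showing that an arbitrary $n$-point metric contains a large subspace that is close to an HST, with the $\log\log n$ loss coming from step~(ii). Two clarifications are worth flagging. First, your opening invocation of probabilistic embedding into HSTs (Bartal's theorem) is the wrong tool here---that goes the other direction and is used for \emph{upper} bounds; you correctly pivot to the Ramsey/extraction statement, but the initial sentence should be dropped. Second, your parameter discussion is a bit tangled: the clean way to state it is that on a sufficiently separated HST with $N$ leaves the randomized lower bound is $\Omega(\log N)$, and the Ramsey theorem of \cite{BLMN03} extracts from any $n$-point metric a subset of size $N = n^{\Omega(1)}$ that $O(\log\log n)$-embeds into such an HST; the distortion of the embedding is what divides the $\Omega(\log N) = \Omega(\log n)$ bound down to $\Omega(\log n/\log\log n)$. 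Your attempt to locate the loss in the separation parameter of the HST itself is not where it actually lives.
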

\begin{theorem}[Bubeck et al.~\cite{BCLL18}]\label{thm: rand upper}
For every HST ${\cal M}$,
$$c^{\rand}_{\cal M} = O(\log n),$$
and therefore, for every finite metric space ${\cal M}$,
$$c^{\rand}_{\cal M} = O(\log^2 n).$$
\end{theorem}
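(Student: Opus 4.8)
The plan is to prove the HST bound $c^{\rand}_{\cal M}=O(\log n)$ by exhibiting an explicit algorithm, and then derive the general bound by a black-box metric embedding. For the embedding step, invoke the theorem of Fakcharoenphol, Rao and Talwar that every $n$-point metric ${\cal M}$ embeds into a distribution over HSTs that dominate ${\cal M}$ with expected distortion $O(\log n)$. Given an $\alpha$-competitive randomized algorithm for HSTs, sample such a tree $T$, run the HST algorithm on $T$, and play its states in ${\cal M}$; since distances only shrink passing from $T$ back to ${\cal M}$ while service costs are unchanged, the cost incurred in ${\cal M}$ is at most the cost in $T$, hence at most $\alpha\cdot z^*_T(\rho)+a_T$. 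Bounding $z^*_T(\rho)$ by the $T$-cost of a fixed ${\cal M}$-optimal solution and taking expectations over $T$ gives $\EE[z^*_T(\rho)]=O(\log n)\cdot z^*(\rho)$ (with $a_T=O(\log n)\cdot\mathrm{diam}({\cal M})$ an admissible additive constant), so $c^{\rand}_{\cal M}=O(\alpha\log n)=O(\log^2 n)$.

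The core is the HST case, which I would attack via continuous-time mirror descent on a fractional relaxation. First reduce to elementary requests (a ``processing fee'' on a single point, all other costs $0$) and to the fractional problem: the algorithm maintains $x\in\Delta_n$ over the leaves, pays $\langle\rho_t,x\rangle$ for service and the HST transportation cost $\sum_e\mathrm{len}(e)\cdot|\partial_e x|$ for movement; a standard online rounding on trees converts a fractional algorithm into a randomized integral one at $O(1)$ loss. The algorithm is then built recursively over the tree: at an internal node with child subtrees $T_1,\dots,T_k$, maintain a distribution $p$ over the children together with recursively-defined fractional states inside each $T_i$, and let $p$ evolve by mirror descent on $\Delta_k$ with a weighted, additively smoothed negative-entropy regularizer $\Phi(p)=\sum_i w_i\,(p_i+\delta_i)\log(p_i+\delta_i)$, where the weights $w_i$ scale with the diameters of the $T_i$ and the shifts satisfy $\delta_i\sim 1/(\text{number of leaves below the node})$. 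The single-level case recovers the classical $O(\log n)$ continuous multiplicative-weights dynamics on a uniform metric. An elementary request continuously pushes mass away from the affected leaf; the induced motion of $p$, glued with the recursive motion inside the $T_i$, defines the trajectory.

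The analysis is an amortized argument with a recursively-defined potential whose top-level term is the Bregman divergence $\Psi=D_\Phi(y\,\|\,x)$ from an offline configuration $y$ to the algorithm's $x$. Two facts drive it. First, when the offline solution moves, $\Psi$ increases by at most $O(1)$ times the offline movement cost (measured at the corresponding scale), where the smoothing $\delta_i$ is what keeps $\nabla\Phi$ bounded near the boundary of the simplex. Second, while a request is being served, the mirror-descent optimality conditions yield $\frac{d}{dt}\big(\text{algorithm cost}+\Psi\big)\le C\cdot\frac{d}{dt}(\text{offline cost})$ with $C=O(\log(\text{branching at the node}))$, the logarithm arising from the ratio of $\Phi$'s extreme values over $\Delta_k$ --- precisely where the choice $\delta_i\sim 1/(\#\text{leaves})$ is used. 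Integrating in time and unrolling the recursion, the per-level constants $O(\log(\text{branching}))$ accumulate along root-to-leaf paths to a total of $O(\log n)$, which gives $\alpha=O(\log n)$.

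The main obstacle is making the recursion compose without the per-level loss turning multiplicative. The subtlety is that, from a parent's viewpoint, motion inside a child subtree is cheap relative to motion between subtrees, so each child's algorithm must stay competitive even when its own movement is charged at an inflated rate while still being compared against an offline optimum charged at the true rate --- an ``unfair'' metrical task system. Showing that the smoothed weighted-entropy dynamics solves this unfair version with only an \emph{additive} $O(\log(\text{branching}))$ loss per level, rather than a multiplicative one, is the crux; it forces a careful calibration of the weights $w_i$ (so that the transportation cost the parent sees matches the scale built into the child's regularizer) and of the shifts $\delta_i$, with matching bookkeeping in the potential. The remaining points --- existence and uniqueness of the continuous trajectory, correctness of the boundary projections when a coordinate reaches $0$, and the fractional-to-randomized rounding on trees --- are routine given the above and I would relegate them to lemmas.
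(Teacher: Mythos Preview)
The paper does not prove this theorem; it is quoted from~\cite{BCLL18} and used only as a black box (e.g., in the proof of Theorem~\ref{thm: paired-uniform} and in Lemma~\ref{lm: 0-infty uniform rand upper}). So there is no in-paper proof to compare your proposal against.

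That said, your outline is a faithful high-level description of the approach in the cited reference: reduce to a fractional problem on the leaves of the HST, run continuous-time mirror descent with a multiscale weighted-and-shifted entropy regularizer, analyze via a Bregman-divergence potential so that the per-level losses accumulate additively along root-to-leaf paths to $O(\log n)$, and then lift to arbitrary metrics via the FRT embedding at an additional $O(\log n)$ factor. One point worth sharpening: in~\cite{BCLL18} the algorithm is a \emph{single} global mirror-descent dynamics driven by one regularizer $\Phi(x)=\sum_{v}w_v\,(x_v+\delta_v)\log(x_v+\delta_v)$ summed over all internal vertices $v$ of the tree (with $x_v$ the mass in the subtree rooted at $v$), rather than a recursion that runs an independent mirror descent at every internal node and glues the pieces. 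The ``unfair MTS'' recursion you describe is closer to the earlier combinatorial approaches of Fiat--Mendel and Seiden; the mirror-descent formulation absorbs the recursion into the choice of regularizer, and the additive (rather than multiplicative) composition you correctly flag as the crux emerges from the fact that the Bregman divergence decomposes as a sum over levels. Your identification of the role of the shifts $\delta_v\sim 1/(\#\text{leaves below }v)$ in bounding $\nabla\Phi$ near the boundary, and of the weights $w_v$ in matching scales, is on target.
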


We are interested in analyzing metrical task systems with constrained requests.
The primary constraint that we investigate in this paper
is the following: there exists a finite set ${\cal R}\subsetneq \left(\RR_+\cup\{+\infty\}\right)^X$
such that all the requests lie in ${\cal R}$. We will denote $m \triangleq |{\cal R}|$.
Let $c^{\detr}_{\cal M, R}$ ($c^{\rand}_{\cal M, R}$, respectively) denote the
deterministic (randomized, respectively) competitive ratio when requests are
restricted to the set ${\cal R}$. Also let $c^{\detr}_{{\cal M}, m}$
($c^{\rand}_{{\cal M}, m}$, respectively) denote the supremum of
$c^{\detr}_{\cal M, R}$ ($c^{\rand}_{\cal M, R}$, respectively) over $|{\cal R}| = m$.
We find that unlike the general case, the competitive ratio that can be guaranteed
for constrained metrical task systems depends crucially on the underlying metric
space ${\cal M}$. In particular, we study a class of ultrametrics called hierarchically
separated trees (HSTs), invented in~\cite{Bar96}. HSTs play a critical role in both the 
best known upper bounds and the best known lower bounds for metrical task systems,
as well as other problems involving metric spaces. For metrical task systems, the
best known upper bounds rely on an asymptotically optimal approximation of any
metric space by a convex combination of HSTs, discovered in~\cite{FRT04}. The
best known lower bounds rely on a lower bound on HSTs and the fact that any
metric space contains a large approximate HST subspace~\cite{BBM01,BLMN03}.
More concretely, an $L$-level HST is a metric space defined on the leaves of a 
leveled rooted node-weighed tree with $L+1$ levels. The leaves are at level $0$, 
the root is the unique node at level $L$, each node other than the root has a parent 
at one level above
its own, all the nodes at level $i$ have the same weight, and this weight increases 
rapidly with the level number. The distance between two leaves is the weight of their 
least common ancestor. (We note that HSTs are often defined alternatively without 
the uniformity constraint on levels.) This defines an ultrametric (which is in particular 
a metric) on the leaves.

We exhibit on the one hand that for uniform metric spaces (which are in the above
terms one-level HSTs), restricting $m$ helps immensely deterministic algorithms,
but not randomized algorithms:
\begin{theorem}\label{thm: uniform}
If ${\cal M}$ is a uniform metric space, then for every $m\le n$,
$$c^{\detr}_{{\cal M},m} = \Theta(m\log(en/m)),$$
and
$$c^{\rand}_{{\cal M},2} = \Theta(c^{\rand}_{\cal M}) = \Theta(\log n).$$
\end{theorem}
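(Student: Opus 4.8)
\medskip
\noindent\emph{Proof proposal.}
The plan is to establish three facts: (i) the deterministic upper bound $c^{\detr}_{{\cal M},m}=O(m\log(en/m))$; (ii) the matching deterministic lower bound $c^{\detr}_{{\cal M},m}=\Omega(m\log(en/m))$; and (iii) the randomized lower bound $c^{\rand}_{{\cal M},2}=\Omega(\log n)$. The randomized upper bound comes for free: restricting the request set can only help the algorithm, so $c^{\rand}_{{\cal M},2}\le c^{\rand}_{\cal M}$, and for a uniform metric (a one-level HST) Theorem~\ref{thm: rand upper} gives $c^{\rand}_{\cal M}=O(\log n)$ (this is classically tight). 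Note also that since $c^{\detr}\ge c^{\rand}$ always, statement (iii) already yields $c^{\detr}_{{\cal M},2}=\Omega(\log n)$, i.e.\ the case $m=2$ of (ii); the general case of (ii) I would prove separately.

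For (i) I would run a phase-based algorithm that exploits the low ``dimension'' of the request set. Enumerate ${\cal R}=\{r_1,\dots,r_m\}$ and attach to each state $x$ its cost vector $v_x=(r_1(x),\dots,r_m(x))\in(\RR_+\cup\{+\infty\})^m$. Within a phase the algorithm maintains the multiplicity vector $N\in\ZZ_{\ge 0}^m$ recording how often each request type has occurred so far in the phase, so the in-phase service cost of a state $x$ is the linear functional $\langle N,v_x\rangle$; call $x$ alive while $\langle N,v_x\rangle<1$. The algorithm resides at an alive state of near-minimum in-phase cost and relocates only when its state dies; once the alive set can no longer support it, the phase ends and is charged $\Omega(1)$ against the offline optimum by the standard amortized argument. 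The crux is a combinatorial bound of $O(m\log(en/m))$ on the number of relocations within a phase: this is where $m$ enters, through the fact that ``alive'' is governed by linear inequalities in the $m$-dimensional space of cost vectors, so the nested family of alive sets $\{x:\langle N,v_x\rangle<\tau\}$ can be traversed only a bounded number of times; a dyadic bucketing of $\tau\in(0,1)$ together with a counting estimate in $\RR^m$ delivers the factor $m\log(en/m)$, and hence the competitive ratio.

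For (ii) I would build a recursive instance in which each of the $m$ requests contributes one factor of $\log(en/m)$. Partition the $n$ states into $\Theta(m)$ blocks of $\Theta(n/m)$ states and install on each block a two-request gadget that, on its own, forces a deterministic cost of $\Omega(\log(n/m))$ against an offline cost of $O(1)$ (this gadget is the deterministic shadow of the construction used for (iii)). Outside its home block each request is given a small but nonzero cost, chosen so that the algorithm cannot ``park'' in a finished block to dodge the attack on the next one, while the optimum can follow a single trajectory whose total cost remains $O(1)$. Composing the per-block lower bounds along this trajectory gives $c^{\detr}_{{\cal M},m}=\Omega(m\log(en/m))$.

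The heart of the theorem, and the step I expect to be hardest, is (iii): extracting the full $\Omega(\log n)$ from only two request types. I would fix two cost functions $r_1,r_2$ on the $n$-point uniform space and an oblivious randomized adversary performing a depth-$\log n$ binary search over the states, which we think of as leaves of a conceptual complete binary tree (the metric itself stays uniform). The adversary draws a uniformly random target leaf in advance; at level $\ell$ it has narrowed to a dyadic block $I_\ell$ of $n/2^\ell$ leaves containing the target, it issues a batch consisting of suitably many copies of $r_1$ and of $r_2$ whose aggregate cost is $\ge 1$ on the half of $I_\ell$ not containing the target but is only $O(1/\log n)$ at the target itself, and it recurses on the target's half. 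For any fixed deterministic algorithm, the half of $I_\ell$ on which it sits is independent of the fresh coin that splits $I_\ell$, so it lies in the charged half with probability $\ge\tfrac12$, hence pays $\Omega(1)$ (by accumulating cost or by moving) at each of the $\log n$ levels, i.e.\ $\Omega(\log n)$ in expectation; the optimum, sitting at the target throughout, pays $\sum_\ell O(1/\log n)=O(1)$, so Yao's principle yields $c^{\rand}_{{\cal M},2}=\Omega(\log n)$. The obstacle is the existence of such $r_1,r_2$: for every dyadic block and every choice of the target's half one needs a nonnegative combination $\alpha r_1+\beta r_2$ exceeding $1$ on the complementary half yet $O(1/\log n)$ at the target. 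This pushes toward a self-similar placement of the cost vectors $v_x$ --- along a convex curve whose effective ``sharpness'' grows geometrically with the level --- and one must verify that the batches can really be made cheap at the target; the catch is that the minimum over leaves of $\alpha r_1+\beta r_2$ is at least $\min(\alpha,\beta)\cdot\min_x(r_1(x)+r_2(x))$, a cost incurred by the target as well, so $\min(\alpha_\ell,\beta_\ell)$ must be kept small at every level while still charging $\ge 1$ to the complementary half, and one must also confirm that the algorithm cannot hedge its location across scales. Balancing all of these constraints simultaneously is the technical core.
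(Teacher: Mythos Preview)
Your three-part decomposition matches the paper's, and part (ii) is essentially the paper's argument: partition the states into $m/2$ blocks of size $2n/m$, plant the two-request gadget from (iii) in each block with cost $0$ outside, and attack whichever block the deterministic algorithm currently occupies; the adversary hides at the last surviving state. (The ``small nonzero cost outside the block'' you propose is unnecessary---zero suffices, because the adversary always targets the occupied block.)

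Parts (i) and (iii), however, have real gaps relative to the paper.

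\textbf{Upper bound (i).} Your algorithm ``sit at a near-minimum alive state and relocate when it dies'' is not the paper's, and your relocation bound is only asserted. The phrase ``dyadic bucketing of $\tau$ together with a counting estimate in $\RR^m$'' does not name an actual argument; nothing in the halfspace structure of $\{x:\langle N,v_x\rangle<\tau\}$ obviously limits the number of times the minimum dies to $O(m\log(en/m))$. The paper's mechanism is different and concrete: by a probabilistic argument (Lemma~\ref{lm: good s}), any set $S$ of states contains a \emph{quantile state} $s$ such that for every request $r\in{\cal R}$ at least $|S|/m$ states of $S$ have $r$-cost at least $r_s$. The algorithm sits at such an $s$; whenever it pays, a $1/m$ fraction of $S$ pays at least as much, so once $|S|/2m$ states accrue cost $\ge 1$ they can be discarded (one ``round''). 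This shrinks $S$ geometrically, yielding at most $O(m\log(n/m))$ rounds, and a short amortization bounds the in-round service cost. Your minimum-cost state does not have this quantile property, so the averaging step that drives the bound is missing.

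\textbf{Randomized lower bound (iii).} Your binary-search framework is exactly right, and you have correctly isolated the obstacle: if each level uses a genuine mixture $\alpha r_1+\beta r_2$, then the target leaf must also pay at least $\min(\alpha,\beta)\cdot\min_x(r_1(x)+r_2(x))$, and pushing $\min(\alpha,\beta)\to 0$ threatens the $\ge 1$ charge on the wrong half. The paper sidesteps this entirely by never mixing. Take $r^0_i=C^{i-n}$ and $r^1_i=C^{-i-1}$ for a huge constant $C$; in round $i$ the adversary repeats \emph{only} $r^{h_i}$ until the state adjacent to the current dyadic interval $J_i$ (on the appropriate side) accrues cost $1$. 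Because costs are geometric with ratio $C$, every state strictly inside $J_i$---in particular the target---then accrues at most $2/C$, so the target's total over all $q=\log_2 n$ rounds is $\le 2q/C$, which is $o(1)$ for $C\gg q$. Meanwhile every state outside the half-line $\bar J_i$ pays at least $1$; since the fresh bit $h_i$ is independent of the algorithm's current location, the algorithm is outside $\bar J_i$ with probability $\ge\tfrac12$ and hence pays $\ge 1$ per round in expectation. So the missing idea is precisely: geometric cost profiles with base $C\gg\log n$, and a single request type per level rather than a linear combination.
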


On the other hand, we study what is in some sense just a slightly more sophisticated 
family of metric spaces. We consider the following family of two-level HSTs, which we
call {\em paired-uniform}. Let $n$ be an even number of points. Partition the points 
into pairs $\{i,i'\}$. (So, according to this notation $i'' = i$.) Let $C\gg 1$ be a large 
constant to be determined later. Set $d(i,i') = 1$ for all pairs, and set $d(i,j) = C$ for 
every two points $i\ne j\ne i'$. It turns out that here restricting $m$ helps neither the 
asymptotic performance of deterministic algorithms, nor that of randomized algorithms:
\begin{theorem}\label{thm: paired-uniform}
If ${\cal M}$ is a paired-uniform metric space, then
$$c^{\detr}_{{\cal M},2} = \Theta(n),$$
whereas 
$$c^{\rand}_{{\cal M},2} = \Theta(\log n).$$
\end{theorem}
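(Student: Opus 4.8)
\section*{Proof proposal for Theorem~\ref{thm: paired-uniform}}

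The plan is to prove four bounds: $c^{\detr}_{{\cal M},2} = O(n)$, $c^{\detr}_{{\cal M},2} = \Omega(n)$, $c^{\rand}_{{\cal M},2} = O(\log n)$, and $c^{\rand}_{{\cal M},2} = \Omega(\log n)$. The two upper bounds are essentially free: $c^{\detr}_{\cal M} = 2n-1 = O(n)$ by Theorem~\ref{thm: det} applied to the whole (unrestricted) instance, and $c^{\rand}_{\cal M} = O(\log n)$ because a paired-uniform metric is a $2$-level HST on $n$ leaves, so Theorem~\ref{thm: rand upper} applies. Since restricting the request set can only help, both restricted ratios inherit these upper bounds. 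So the real content is the two lower bounds, which must hold \emph{even though} $m=2$; this is the surprising part, since for uniform metrics Theorem~\ref{thm: uniform} shows $m=2$ collapses the deterministic ratio to $\Theta(\log n)$.

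For the randomized lower bound $c^{\rand}_{{\cal M},2} = \Omega(\log n)$, I would exhibit a single pair ${\cal R} = \{\rho_0, \rho_1\}$ of cost functions and a distribution over request sequences forcing any randomized algorithm to pay $\Omega(\log n)$ times optimum (Yao's principle). The natural choice: let $\rho_b$ for $b\in\{0,1\}$ charge cost $\infty$ (or a large finite cost, say $C$) on \emph{one} point of every pair and $0$ on the other, with the pattern ``$b$ selects the first element of each pair, $1-b$ selects the second.'' Then a request $\rho_b$ forbids exactly one point per pair, i.e.\ it is a subset request to a set of size $n/2$ (one representative from each pair). Serving $\rho_b$ costs nothing if you are at an allowed point; moving within a pair costs $1$, moving across pairs costs $C$. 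The adversary plays a random i.i.d.\ (or suitably correlated) sequence of $\rho_0$'s and $\rho_1$'s; the offline optimum can sit inside one well-chosen pair and pay $O(1)$ total by only ever moving distance $1$, whereas an online algorithm, not knowing which $b$ comes next, is repeatedly forced to move. The key step is to show that the forced movement accumulates to $\Omega(\log n)$ relative to optimum. This is exactly the structure of the classical randomized MTS lower bound on a uniform space of size $2$ \emph{inflated}: within each pair the problem looks like a $2$-point uniform MTS, but there are $n/2$ independent pairs and the offline optimum exploits the cheapest one. I expect this to reduce, after conditioning on the algorithm's current pair, to the $\mathcal H_k$-type bound of~\cite{BLS92} on $\sim n/2$ ``arms,'' giving $\Omega(\log n)$.

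For the deterministic lower bound $c^{\detr}_{{\cal M},2} = \Omega(n)$ I would use the \emph{same} two requests $\rho_0,\rho_1$ but now argue adversarially against a fixed deterministic algorithm $A$. At each step, the adversary issues the request $\rho_b$ that \emph{forbids the point $A$ currently occupies}; this is always possible since the two requests forbid complementary points of each pair, so whichever point $A$ sits at, one of $\rho_0,\rho_1$ forbids it. Thus $A$ is forced to move on every request: either it pays $1$ to hop within its pair, or $C$ to change pairs. Over a long sequence of $T$ requests $A$ pays at least $T$. The adversary's offline strategy is to pick the pair that $A$ visits least often over a block, park there, and pay $0$ for all but a bounded number of requests — but to make this pay off by a factor $n$ I need a cleaner accounting. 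The standard trick: break the sequence into phases; in each phase the adversary forces $A$ to accumulate cost $\Theta(n\cdot C)$ (by a potential/averaging argument showing $A$ must, among other things, cross between pairs $\Omega(n)$ times, since staying in one pair while being forced to move every step means oscillating, and the adversary can always ``chase'' $A$ so that after $\Theta(n)$ within-pair moves $A$'s configuration is such that it is cheaper for the adversary to have forced a cross), while a single fixed offline state pays only $O(C)$ in that phase. The cleanest route is probably to reduce to the known $(2n-1)$ lower bound of Theorem~\ref{thm: det}: show that the adversary, using only $\rho_0$ and $\rho_1$, can simulate the hard instance from~\cite{BLS92} up to constants, exploiting that on a paired-uniform metric ``forbid the current point'' is realizable with just these two requests, which is precisely what is \emph{not} true on a plain uniform metric (where forbidding requires $n-1$ distinct requests).

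\medskip
\noindent\emph{Main obstacle.} The hard step is the deterministic lower bound: making the factor exactly $\Theta(n)$ rather than, say, $\Theta(\log n)$ or $\Theta(\sqrt n)$. The subtlety is that although $A$ is forced to move every step, cheap within-pair moves (cost $1$) do not by themselves yield an $\Omega(n)$ ratio against an offline optimum that also pays $\Omega(1)$. One must argue that the adversary can force $A$ into $\Omega(n)$ \emph{expensive} cross-pair transitions per unit of offline cost — essentially that oscillating inside a single pair forever is not a viable escape for $A$, because the adversary, by always forbidding $A$'s current point, controls which point of the pair $A$ occupies and can time a ``relocation'' of the cheap offline state to a pair $A$ has just left. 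Pinning down this adversary argument (or, alternatively, the reduction to~\cite{BLS92}'s construction restricted to two requests) and verifying that the per-phase offline cost really is a constant multiple of $C$ while $A$'s per-phase cost is $\Omega(nC)$, is where the real work lies. I would also need to fix $C$ (e.g.\ $C = \Theta(n)$ or any $C \gg 1$, depending on the accounting) so that cross-pair moves dominate and the phase structure closes up.
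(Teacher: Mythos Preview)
Your upper bounds are fine and match the paper. The lower bounds, however, both break for the same structural reason: the two requests you propose are $\{0,\infty\}$-valued, and with $m=2$ such requests the paired-uniform metric is \emph{easy}. Concretely, your $\rho_0$ and $\rho_1$ each forbid exactly one point of every pair, complementarily. Then any algorithm that simply hops inside its current pair whenever it is hit pays at most $1$ per request, and the offline optimum pays \emph{exactly the same} (it is also forced to hop on the same steps, since every pair sees the same forbidden side). The competitive ratio of this instance is $1$, not $\Omega(n)$ and not $\Omega(\log n)$. More generally, Theorem~\ref{thm: 0-infty two-level} gives $\hat c^{\detr}_{2,m}\le m2^m$, so for $m=2$ no $\{0,\infty\}$ request set on any two-level HST can exceed a constant competitive ratio. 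Your stated ``main obstacle'' --- forcing $\Omega(n)$ cross-pair moves --- is therefore not a technical difficulty to be overcome; it is impossible with these requests.

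The paper's construction avoids this by using \emph{many cost scales} packed into two requests. It sets $h_i=(Cn)^{i-n}$ and takes $r$ with $r_i=h_i$, $r_{i'}=0$ and $r'$ with $r'_i=0$, $r'_{i'}=h_{n-i-1}$; the adversary always plays whichever of $r,r'$ charges the algorithm's current point. A round ends when the algorithm either changes pairs or has accumulated cost $C$ inside one pair; a phase is $n/2-1$ rounds, so the algorithm pays at least $C(n/2-1)$ per phase. Because the algorithm visits at most $n/2-1$ pairs in a phase, some pair $\{i,i'\}$ is unvisited; the adversary hides there, paying $C$ once to enter it and at most $1$ per round to switch sides, while the geometric decay of the $h_j$ makes its service cost negligible (if the algorithm is in pair $j$ with $j>i$, then $i$ is charged $\le C\cdot(Cn)^{i-j}\le 1/n$ per round). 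Choosing $C=\Omega(n)$ gives the $\Omega(n)$ ratio. The randomized $\Omega(\log n)$ lower bound likewise comes from the multi-scale construction of Lemma~\ref{lm: lower bound m=2} (Theorem~\ref{thm: uniform}): one makes the two requests constant across each pair, collapsing the instance to a uniform metric on $n/2$ points, and invokes that lemma. In short, the missing idea is that two requests suffice only because each request encodes $\Theta(n)$ distinct cost levels; $\{0,\infty\}$ requests cannot do the job here.
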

Notice that by prior results, the upper bounds hold even for arbitrary requests
in arbitrary HSTs.

The lower bound constructions in the proofs of Theorems~\ref{thm: uniform}
and~\ref{thm: paired-uniform} use sets of requests that assign many widely
varying cost values. Thus, it makes sense to consider requests using costs
from a small set of scales, and in particular to consider metrical service systems
that use just two scales, $0$ or $\infty$ (a.k.a. set chasing). Let
$\hat{c}^{\detr}_{{\cal M},m}$, ($\hat{c}^{\rand}_{{\cal M},m}$, respectively) 
denote the best deterministic (randomized, respectively) competitive ratio that 
can be achieved for chasing sets drawn from a collection ${\cal R}$ of $m$ 
subsets of points in ${\cal M}$. In this case we show that for uniform spaces, 
restricting $m$ helps {\em both} deterministic and randomized algorithms. 
Moreover, if $m = O(\log n)$, the deterministic and the randomized competitive
ratios are asymptotically identical. More precisely we have:
\begin{theorem}\label{thm: 0-infty uniform}
If ${\cal M}$ is a uniform metric space, then for every $m\in\NN$,
$$\hat{c}^{\detr}_{{\cal M},m} = \min\{m,n\},$$
and,
$$\hat{c}^{\rand}_{{\cal M},m} = \Theta\left(\min\{m,\log n\}\right).$$
\end{theorem}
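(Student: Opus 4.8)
The plan is to prove the four inequalities separately, deriving the randomized upper bound from the deterministic upper bound together with Theorem~\ref{thm: rand upper}. For $\hat{c}^{\detr}_{\mathcal{M},m}\le\min\{m,n\}$ I would analyze a ``retreat to the feasible region'' algorithm: it maintains the intersection $I$ of all requests received since the last \emph{reset} and always keeps its state in $I$; if it is still in $I$ after a new request it does nothing; if a request $R$ forces it out but $R\cap I\neq\emptyset$ it moves to a point of $R\cap I$ and replaces $I$ by $I\cap R$; if $R\cap I=\emptyset$ it \emph{resets}, moving to an arbitrary point of $R$ and setting $I\leftarrow R$. Partition the request sequence into phases delimited by resets. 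Within one phase the algorithm moves at most once per distinct request, hence at most $m$ times, and each move shrinks $I$ strictly along a chain of nonempty subsets of $X$, hence at most $n$ times; so there are at most $\min\{m,n\}$ moves per phase. Conversely, the requests of a phase together with the request that triggers the next reset have empty common intersection, so on the corresponding time intervals (whose move-steps for distinct phases are disjoint) the optimum moves at least once; hence the optimum pays at least one per phase, and amortizing gives the bound.

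For the randomized upper bound, a deterministic algorithm is a degenerate randomized one and metrical service systems are the $0/\infty$-cost special case of metrical task systems, so $\hat{c}^{\rand}_{\mathcal{M},m}\le\min\bigl\{\hat{c}^{\detr}_{\mathcal{M},m},\,c^{\rand}_{\mathcal{M}}\bigr\}$; since a uniform metric is a one-level HST, Theorem~\ref{thm: rand upper} gives $c^{\rand}_{\mathcal{M}}=O(\log n)$, and with the previous paragraph this yields $\hat{c}^{\rand}_{\mathcal{M},m}=O(\min\{m,\log n\})$. For the deterministic lower bound I would run an adaptive adversary on $k=\min\{m,n\}$ of the points using the $k$ ``co-singleton'' requests $X\setminus\{i\}$: at every step the adversary forbids the algorithm's current state, so the algorithm pays one per step. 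Declaring a phase to end once all $k$ relevant states have been forbidden, each phase lasts at least $k$ steps, while the offline optimum can rest at the state whose next forbidden time is latest and pay only $O(1)$ per phase; a pigeonhole/shifting estimate makes this quantitative and gives $\hat{c}^{\detr}_{\mathcal{M},m}=\Omega(\min\{m,n\})$.

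The randomized lower bound is the heart of the statement, and I would prove it with an oblivious adversary built on a deliberately small ``bit-revelation'' set system. Identify $2^{\ell}$ of the points with $\{0,1\}^{\ell}$, where $\ell=\Theta(\min\{m,\log n\})$ is chosen so that $2\ell\le m$ and $2^{\ell}\le n$, and let $\mathcal{R}$ consist of the $2\ell$ half-cubes $\{x:x_i=b\}$ over coordinates $i\in\{1,\dots,\ell\}$ and bits $b\in\{0,1\}$. The adversary works in phases of $\ell$ requests: at the start of a phase it draws a uniformly random target $p^{*}\in\{0,1\}^{\ell}$, independent of the algorithm's randomness, and issues as its $i$-th request the half-cube that contains $p^{*}$ in coordinate $i$. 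Because each revealed bit is a fresh fair coin independent of everything the algorithm has observed so far, at each of the $\ell$ steps the algorithm's current point lies outside the new request with probability $\tfrac12$ and must pay a unit move, so its expected cost per phase is at least $\ell/2$; the offline optimum, which sees $p^{*}$ at the phase's start, moves there and stays, paying exactly one per phase. Averaging over the random sequence (a Yao-type argument) gives $\hat{c}^{\rand}_{\mathcal{M},m}=\Omega(\ell)=\Omega(\min\{m,\log n\})$. Run adaptively instead---set each bit opposite to the algorithm's current bit---the same construction forces a move at every one of the $\ell$ steps, re-deriving the deterministic lower bound up to $\ell=\Theta(\log n)$, which complements the co-singleton construction in the small-$m$ regime.

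The main obstacle is precisely this last construction: keeping a classical $\mathcal{H}_n$-style elimination confined to only $O(\log n)$ distinct requests while still forcing competitive ratio $\Omega(\log n)$. The idea that makes it work is that ``reveal one coordinate of a uniformly random hidden point'' is simultaneously \emph{minimally} informative to an online algorithm---it cannot pre-position to dodge the next forced move---and \emph{maximally} informative to the offline optimum---which learns the whole target at once---and there are only $2\log n$ half-cubes rather than exponentially many subsets. A secondary technical point is pinning down the exact constant $\min\{m,n\}$ in the deterministic bound, where the most transparent constructions lose a $1-o(1)$ factor and a more careful choice of the request set is needed.
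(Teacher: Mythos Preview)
Your approach is essentially identical to the paper's: the same intersection-maintaining algorithm for the deterministic upper bound (Lemma~\ref{lm: 0-infty uniform detr}), the same combination of Theorem~\ref{thm: rand upper} with the deterministic bound for the randomized upper bound (Lemma~\ref{lm: 0-infty uniform rand upper}), and exactly the same half-cube/bit-revelation construction for the randomized lower bound (Lemma~\ref{lm: 0-infty uniform rand lower}).

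There is one small technical slip in your deterministic lower bound. Taking the $k$ requests to be the full co-singletons $X\setminus\{i\}$ does not work when $k=m<n$: any point outside your selected $k$ points lies in \emph{every} such request, so the algorithm can move there once and never be evicted again. The fix is the one the paper uses---restrict the allowed states to the $k$ selected points, i.e., use requests $\{i_1,\dots,i_k\}\setminus\{i_j\}$ (all other states carry cost $\infty$ in every request), which is precisely the deterministic $k$-server lower bound on a $(k{+}1)$-point uniform subspace with $k+1=\min\{m,n\}$. With that correction your argument goes through and matches the paper.
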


Unlike the more general case of Theorem~\ref{thm: paired-uniform}, in the case
of metrical service systems on two-level HSTs, restricting the number of possible 
requests helps, but not as much as in uniform metric spaces. In order to state the
results, we introduce a bit more notation. Let $\hat{c}^{\detr}_{L,m}$ 
($\hat{c}^{\rand}_{L,m}$, respectively) 
denote the supremum of $\hat{c}^{\detr}_{{\cal M},m}$ 
($\hat{c}^{\rand}_{{\cal M},m}$, respectively) over all $L$-level HSTs ${\cal M}$.
We prove:
\begin{theorem}\label{thm: 0-infty two-level}
For every $m\in\NN$,
$$2^{\lfloor m/2\rfloor-1} \le  \hat{c}^{\rand}_{2,m} \le \hat{c}^{\detr}_{2,m} \le m2^m.$$
\end{theorem}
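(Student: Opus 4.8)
The middle inequality $\hat{c}^{\rand}_{2,m}\le\hat{c}^{\detr}_{2,m}$ is immediate, since a deterministic algorithm is a degenerate randomized one, so two bounds remain. Throughout, fix a two-level HST, let $G_1,\dots,G_k$ be its level-$1$ subtrees (the ``groups''), and rescale so that two leaves in the same group are at distance $1$ and two leaves in different groups are at distance $D\ge 1$. For an admissible set $\mathcal R=\{R_1,\dots,R_m\}$, observe that for each fixed $j$ the restrictions $R_i\cap G_j$ take at most $m$ distinct nonempty values, and that each $R_i$ is serviceable exactly from the set of groups $\{j:R_i\cap G_j\ne\emptyset\}$, of which there are at most $m$ distinct.

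\emph{Upper bound $\hat{c}^{\detr}_{2,m}\le m2^m$.} I would partition the request sequence into \emph{epochs}: maximal runs that admit a group intersecting every request in the run. If there are $p$ epochs, then any solution (in particular the optimum) changes groups at least $p-1$ times, since at each epoch boundary every group that was serviceable throughout the previous epoch becomes unserviceable; hence it pays at least $(p-1)D$, and inside each epoch it also pays at least the optimal \emph{intra-group} cost it incurs there. The algorithm I would analyze runs, within an epoch, a layered-graph-traversal-style strategy over the groups: it sits in one group and, as long as the incoming request intersects that group, serves it inside the group by the deterministic uniform-metric algorithm of Theorem~\ref{thm: 0-infty uniform} applied to the at most $m$ traces present there (cost at most $m$ times the intra-group optimum); when a request disjoint from the current group arrives it hops to the group minimizing a potential that is \emph{exponential} in how many times a group has already been knocked out during the current epoch. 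Because only $m$ distinct requests exist, each group accumulates at most $m$ distinct knock-out contributions, so this potential never exceeds $2^{O(m)}$; a standard potential argument then charges all inter-group movement within an epoch to $2^{O(m)}$ times the $D$ the optimum pays for that epoch, and all intra-group movement to $m$ times the optimum's intra-group cost. Summing over epochs, and choosing the constants in the potential carefully, yields the bound $m2^m$.

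\emph{Lower bound $\hat{c}^{\rand}_{2,m}\ge 2^{\lfloor m/2\rfloor-1}$.} Since a uniform metric is a degenerate two-level HST, Theorem~\ref{thm: 0-infty uniform} already gives $\hat{c}^{\rand}_{2,m}=\Omega(\min\{m,\log n\})$, so the real content is the \emph{exponential} dependence for large $m$. I would build, by induction on $t$, a two-level HST $\mathcal H_t$ with an admissible set of $m=2t$ requests together with a distribution over request sequences on which every deterministic online algorithm has expected cost at least $2^{t-1}$ times the optimum; Yao's principle then gives the bound for randomized algorithms. The base case $t=1$ is trivial. For $\mathcal H_{t+1}$ I would take two disjoint copies of the group family of $\mathcal H_t$, hang all of their groups from a common root, lift each of the $2t$ requests so that it imposes the corresponding $\mathcal H_t$-constraint on \emph{both} copies simultaneously, and add two new requests, one serviceable only from the first copy and one only from the second. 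The adversary uses the new requests to drive the online algorithm into whichever copy it is willing to commit to, runs the $\mathcal H_t$ adversary against it inside that copy (charging it $2^{t-1}$ times the intra-copy optimum), and occasionally uses the new requests again to force the algorithm across copies — so that it must discard and later rebuild its progress — while the offline optimum stays put in one copy at the cheap $\mathcal H_t$-offline configuration; the cost wasted on abandoned progress supplies the extra factor of $2$.

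\emph{Main obstacle.} The heart of the theorem is the inductive doubling in the lower bound. One must realize $\mathcal H_{t+1}$ as a genuine two-level HST with exactly $2t+2$ admissible requests, couple the two copies to the lifted requests so that progress made in one copy is worthless once the algorithm is driven into the other, ensure that the two forcing requests never damage the offline optimum, and carry the entire argument out at the level of distributions so that it defeats randomized algorithms — the last requirement being the surprising one, since on general two-level HSTs (Theorem~\ref{thm: paired-uniform}) randomization is exponentially powerful, so the inductive invariant and the forcing gadget must be designed so that randomizing over the few available choices provably does not help.
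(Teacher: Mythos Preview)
Your middle inequality is indeed trivial, and your epoch decomposition for the upper bound is a reasonable starting point (the claim that OPT switches groups at least $p-1$ times is correct, since the greedy epoch partition is the minimum partition into intervals each admitting a common group). However, the rest of your upper-bound argument is too vague to stand as a proof: within an epoch the algorithm may visit up to $m$ groups, and in each one incur substantial \emph{intra}-group cost that you never charge to OPT (OPT may sit in a different group entirely). Your ``exponential potential'' and ``standard potential argument'' are not specified, and it is not clear what the $2^m$ factor is actually counting. The paper's argument is quite different and cleaner: it defines a \emph{period} to end only when every level-$1$ subtree has been hit $C$ times, guaranteeing OPT pays at least $C$; it then bounds the algorithm's cost by observing that each phase exposes a nonempty \emph{hitting set} $S\subseteq{\cal R}$, and once a set appears $C$ times all subtrees it hits are marked, so at most $C(2^m-1)$ phases occur, each costing $\le m$.

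The real gap is in your lower bound. Your two-copies recursion cannot work as stated. You introduce two new requests, one serviceable only from copy~1 and one only from copy~2; but then the offline optimum, which you want to ``stay put in one copy,'' would pay $\infty$ on one of them. If instead the adversary uses only one of the two forcing requests, there is no mechanism to drive the online algorithm back and forth, and no doubling occurs. If the adversary randomizes over which copy to force, OPT (being offline) simply moves to the correct copy, paying $D$ per switch---exactly the same expected forcing cost as the online algorithm---so again nothing is gained. The paper's construction is not a two-copy recursion at all: it labels states by points of the cube $\{0,1\}^{m/2}$, pairs the $m$ requests as $(r_{i,0},r_{i,1})$, and builds $2^{2^{m/2-1}}$ subtrees, one for each transversal of the antipodal pairs of cube points. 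The adversary then concatenates $2^{m/2-1}$ randomly chosen length-$(m/2)$ ``sequences,'' each of which independently hits the algorithm's current subtree with probability $\tfrac12$, while one specific subtree---determined by the entire meta-sequence---is never hit and shelters OPT. The exponential blow-up comes from the number of antipodal-pair transversals, not from duplicating copies.
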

We note that we actually prove a somewhat tighter lower bound of ${m \choose \lfloor m/2\rfloor}$
on $\hat{c}^{\detr}_{2,m}$, see Lemma~\ref{lm: 0-infty two-level detr lower}.
We also note that it is not surprising that for sufficiently large HSTs, the competitive
ratio (deterministic or randomized) depends solely on $m$ and the number of levels.
This is because one can trim isomorphic nodes from the tree.
(Note however that for deterministic algorithms the trivial bound obtained this way is 
exponentially worse than the ones proposed above.) 
The surprising aspects
of these theorems are (a) that the deterministic competitive ratio nearly matches the 
randomized competitive ratio, and 
(b) that for two-level HSTs, a collection of $O(\log\log n)$ set chasing requests is 
sufficient to generate a randomized lower bound of $\Omega(\log n)$ (while the 
classical lower bound requires $\Omega(n)$ such requests). In fact, this latter aspect 
extends to and is amplified in HSTs with any number of levels. So, using just $6$ set 
chasing requests, we can get a lower bound of $\Omega(\log n)$ for infinitely many 
$n$, on HSTs with more and more levels. The argument is outlined in 
Section~\ref{sec: discussion}.

Our main results are summarized in Table~1.

\begin{center}
\begin{table}
\begin{center}
\begin{tabular}{ccc}
one-level HSTs (uniform metrics) & & two-level HSTs \\ \\
\begin{tabular}{|c||c|c|}
\hline
 & deterministic & randomized \\
 \hline
 \hline
 MTS & $\Theta(m \log n)$ & $\Theta(\log n)$ \\
 \hline
 MSS & $\Theta(m)$ & $\Theta(m)$ \\
 \hline
 \end{tabular}
 &
 &
 \begin{tabular}{|c||c|c|}
 \hline
 & deterministic & randomized \\
 \hline
 \hline
 MTS & $\Theta(n)$ & $\Theta(\log n)$ \\
 \hline
 MSS & $2^{\Theta(m)}$ & $2^{\Theta(m)}$ \\
 \hline
 \end{tabular}
 \end{tabular}
 \end{center}
 \label{tab: table1}
 \caption{Summary of our results, for any $n\gg m$.}
\end{table}
\end{center}

\section{Proofs}

\begin{lemma}\label{lm: good s}
Let ${\cal M} = (X,d)$ be an $n$-point uniform metric space.
Consider any set of requests ${\cal R}$ of cardinality $|{\cal R}| = m$ and
any set $S\subseteq X$ of cardinality $|S| > m\log(n/m)$. Then, there exists a 
state $s\in S$ such that for every $r\in {\cal R}$,
\begin{equation}\label{eq: eliminate}
\left|\{s'\in S:\ r_{s'}\ge r_s\}\right|\ge \frac{1}{m}\cdot |S|.
\end{equation}
\end{lemma}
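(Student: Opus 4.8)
The plan is to prove the lemma by a union bound over the $m$ requests. Write $\mathcal{R} = \{r^{(1)},\dots,r^{(m)}\}$, and for each $j$ let $B_j \subseteq S$ be the set of states that \emph{violate} \eqref{eq: eliminate} for $r = r^{(j)}$, that is, $B_j = \{s\in S:\ |\{s'\in S:\ r^{(j)}_{s'}\ge r^{(j)}_s\}| < \frac{1}{m}|S|\}$. Every state $s\in S\setminus\bigcup_{j=1}^m B_j$ simultaneously satisfies the conclusion for all $r\in\mathcal{R}$, so it suffices to prove $\bigcup_{j=1}^m B_j \neq S$.

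The one step that carries any content is the bound $|B_j| < \frac{1}{m}|S|$ for each fixed $j$. I would argue it as follows: assuming $B_j\neq\emptyset$, let $v$ be the smallest value of $r^{(j)}_s$ over $s\in B_j$, attained at some $s_0\in B_j$, and set $T = \{s'\in S:\ r^{(j)}_{s'}\ge v\}$. Since $s_0\in B_j$ and $r^{(j)}_{s_0}=v$, the defining inequality of $B_j$ gives $|T| < \frac{1}{m}|S|$; on the other hand, every $s\in B_j$ has $r^{(j)}_s\ge v$ by minimality of $v$, so $B_j\subseteq T$ and hence $|B_j|\le|T| < \frac{1}{m}|S|$. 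Infinite costs need no separate treatment: if $v = +\infty$, then $T$ is simply the set of states on which $r^{(j)}$ is infinite.

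Summing over $j$ then gives $\bigl|\bigcup_{j=1}^m B_j\bigr| \le \sum_{j=1}^m |B_j| < m\cdot\frac{1}{m}|S| = |S|$, so some state of $S$ lies outside every $B_j$, which is exactly the assertion. I do not expect a genuine obstacle here; the only care needed is in the middle step, around ties among the cost values and the possibility of infinite costs. It is worth noting that this argument uses nothing about $|S|$ beyond $|S|\ge 1$, so the hypothesis $|S| > m\log(n/m)$ is stronger than what the stated conclusion alone requires — presumably it is the regime relevant to the way the lemma is invoked (repeatedly peeling off such a state while a large candidate pool must remain), and if an iterated version is what is actually needed, I would instead process the requests one at a time, discard $B_j$ at step $j$, and track the surviving pool, at which point a threshold of order $m\log(n/m)$ should appear naturally.
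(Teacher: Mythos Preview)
Your proof is correct and is essentially the same as the paper's: the paper phrases the argument probabilistically (pick $s\in S$ uniformly at random and observe that for each $r$ the bad event has probability $<1/m$, then union bound), which is exactly your counting bound $|B_j|<\frac{1}{m}|S|$ followed by a union bound. Your observation that the hypothesis $|S|>m\log(n/m)$ is not used is also correct---the paper's proof does not use it either; the threshold is only relevant to how the lemma is applied in the subsequent algorithm.
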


\begin{proof}
The proof is a simple application of the probabilistic method. Choose
$s\in S$ uniformly at random. We have that for every $r\in {\cal R}$,
$$\Pr\left[|\{s'\in S:\ r_{s'}\ge r_s\}| < \frac{1}{m}\cdot |S|\right] < \frac{1}{m}.$$
Therefore, applying the union bound,
$$\Pr\left[\exists r\in {\cal R}:\ |\{s'\in S:\ r_{s'}\ge r_s\}| < \frac{1}{m}\cdot |S|\right] < 1,$$
so there is a choice of $s\in S$ that satisfies Equation~\eqref{eq: eliminate}.
\end{proof}

\begin{lemma}\label{lm: det upper}
Let ${\cal M}$ be an $n$-point uniform metric space. Then,
$c^{\detr}_{{\cal M},m} = O(m\log(en/m))$.
\end{lemma}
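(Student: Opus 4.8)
The plan is to design a deterministic algorithm that maintains a shrinking "candidate set" of states, using Lemma~\ref{lm: good s} repeatedly to pick a good state to sit in. Concretely, I would run the algorithm in phases. At the start of a phase we have a candidate set $S\subseteq X$; initially $S=X$. As long as $|S|>m\log(n/m)$, Lemma~\ref{lm: good s} gives us a state $s\in S$ such that every request $r\in\mathcal R$ has cost at least $r_s$ on at least a $1/m$-fraction of $S$. The algorithm moves to such an $s$ and stays there, serving incoming requests, until the total service cost it has paid while sitting at $s$ during this "sub-stage" exceeds some threshold $\theta$ (to be chosen as a constant times the smallest positive request value, or more robustly handled by a standard rescaling/work-function style argument). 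When that happens, the accumulated request cost at $s$ is large; but by the defining property of $s$, the same requests charged at least a comparable amount to a $1/m$-fraction of the states of $S$. So we can remove from $S$ all states whose accumulated cost (over this sub-stage) is below $r_s$ summed appropriately — i.e., we shrink $S$ to a set of size at most $(1-1/m)|S|$, charging the adversary: any offline solution that wishes to avoid cost must eventually leave every state of $S$, so $z^*$ grows by roughly $\theta$ each time we do a full "round" of eliminations across the current candidate set.

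More carefully, I would iterate the "pick $s$, absorb cost $\theta$, eliminate a $1/m$-fraction" step; after $O(m\log(|S_0|/|S_{\text{final}}|))$ such steps the candidate set drops from its initial size down to at most $m\log(n/m)$, at which point the remaining metric has only $O(m\log(n/m))$ relevant states and we can fall back on the classical deterministic bound of Theorem~\ref{thm: det} (which on a uniform metric with $n'$ points is $2n'-1=O(m\log(n/m))$), so the tail costs only an $O(m\log(en/m))$ factor. The algorithm's total cost over a "super-phase" (from $S=X$ until $S$ is exhausted and we reset) is: the movement cost, which is $O(1)$ per elimination step hence $O(m\log(en/m))$ total; plus the absorbed service cost, which is $\theta$ per step, hence $O(m\log(en/m))\cdot\theta$; while the optimum must pay $\Omega(\theta)$ over that same period, because every state of the original candidate set has, by the successive applications of Lemma~\ref{lm: good s}, accumulated service cost $\Omega(\theta)$ — so wherever OPT sits (or however it moves within the phase), it is charged $\Omega(\theta)$, or it pays to leave, which resets things in OPT's favor by at most a constant. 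Dividing, the competitive ratio is $O(m\log(en/m))$.

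The main obstacle is making the charging argument to OPT rigorous: one has to argue that by the end of a super-phase, every state in $X$ (not just those the algorithm visited) has accrued $\Omega(\theta)$ in service cost along the subsequence of requests seen during that super-phase, so that OPT cannot dodge the cost by hiding in a state the algorithm never occupied. This follows by tracking, for each eliminated state $s'$, the fact that it was removed precisely because its accumulated cost reached the level $r_s$ of the absorbing state at the time of its elimination — but one must be careful that the accounting is consistent across the different sub-stages (the requests charged to different states are different suffixes), and that the threshold $\theta$ is chosen uniformly. The clean way to handle the "constant $a$" in the competitive definition and the choice of $\theta$ is a work-function / doubling argument: guess $\theta$, and if OPT turns out to pay less than expected in a super-phase, double $\theta$ for the next one; the extra additive loss telescopes into the additive constant $a$. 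I expect the rest — the movement-cost bookkeeping and the geometric shrinkage count $O(m\log(en/m))$ — to be routine.
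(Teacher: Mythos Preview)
Your skeleton matches the paper's: maintain a shrinking candidate set, repeatedly invoke Lemma~\ref{lm: good s} to pick a good state, and once $|S|$ drops below $m\log(en/m)$ fall back on the $O(|S|)$-competitive classical algorithm. Where your plan diverges, and where it currently has a real gap, is the stopping rule for a sub-stage and the corresponding charge to OPT.

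You propose to sit at $s$ until the \emph{algorithm} has absorbed cost $\theta$, and then eliminate a $1/m$-fraction of $S$, asserting that these eliminated states have each accrued cost comparable to $\theta$. But Lemma~\ref{lm: good s} only gives a per-request guarantee: for each single request $r$, at least $|S|/m$ states satisfy $r_{s'}\ge r_s$. Across many requests these $|S|/m$-sets can be different, so after the algorithm has paid $\sum_t \rho_t(s)=\theta$ you only know that the \emph{total} cost over $S$ is at least $(|S|/m)\,\theta$, i.e.\ the average per state is $\ge \theta/m$. Nothing forces any $|S|/m$ states to have accrued $\Omega(\theta)$ individually; a priori they could all sit at $\theta/m$, which would lose you a factor of $m$ in the ratio. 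Your sentence ``it was removed precisely because its accumulated cost reached the level $r_s$ of the absorbing state'' conflates the per-request comparison with the summed comparison, and that is exactly the step that does not follow.

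The paper's remedy is to flip the stopping criterion: a round ends not when the algorithm has paid $\theta$, but when at least $|S|/(2m)$ states have accrued cost $\ge 1$ in the current phase. Eliminating those states is then automatically justified (each has paid $\ge 1$), and the phase ends only when every state has paid $\ge 1$, so OPT pays $\ge 1$. The price is that the algorithm's service cost $a_r$ in round $r$ is now variable; the paper controls $\sum_r a_r$ by a potential argument: the truncated average $o_r$ of $\min(1,\text{cost}(s'))$ over $s'\in S_r$ satisfies $o_r \ge o_{r-1} + a_r/(2m) - 1/(2m)$ (the $a_r/(2m)$ term comes from Lemma~\ref{lm: good s} together with the fact that fewer than $|S|/(2m)$ states are already ``saturated''), and since $o_r\le 1$ this telescopes to $\sum_r a_r = O(m\log(en/m))$. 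This potential/truncated-average step is the missing idea in your plan; once you adopt the states-based stopping rule it replaces your doubling-on-$\theta$ workaround entirely, and the threshold is simply the movement cost $1$.
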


\begin{proof}
Consider the following algorithm. The algorithm works in phases, and
these are further partitioned into rounds. In a phase, we will denote by 
$S_r$ the set $S$ at the end
of round $r$. We define the cost that a state $s$ acrues during some
interval $I$ as $\Sigma(I; s) = \sum_{t\in I} \rho_t(s)$.
In the beginning of a phase, set $S_0 = X$ and move to a 
state $s$ that satisfies Equation~\eqref{eq: eliminate}. As soon as at least 
$|S_{r-1}| / 2m$ states in $S_{r-1}$ acrue a cost of at least $1$ in the
current phase (including the new request), round $r$ ends. When this 
happens, remove from $S_{r-1}$ the $\left\lceil |S_{r-1}| / 2m\right\rceil$ states that acrued
the highest cost. Set $S_r$ to be the remaining states in $S_{r-1}$. 
If $|S_r| \ge m\log(en/m)$, repeat the above 
process with a new $s\in S_r$. Otherwise, this is the last round. In this last
round, use an $|S_r|$-competitive algorithm, restricted to $S_r$. Execute
this algorithm until the optimal cost of serving this round reaches at least $1$.
Notice that this implies that by the end of the phase, all states in $X$ acrued
a cost of at least $1$ during the phase. Therefore, the adversary's cost for
the entire phase must be at least $1$, because it either stayed at one state
and paid a service cost of $1$, or moved at least once and paid a movement
cost of $1$. 

In order to analyze this algorithm, fix a phase and denote by $a_r$ the total 
cost that the algorithm acrues in round $r$ of this phase. Also denote by $o_r$
the average over $s\in S_r$ of the minimum between $1$ and the cost acrued 
by $s$ in this phase by the end of round $r$. (Set $o_0 = 0$.) Notice that 
$|S_r| \le \left(1 - \frac{1}{2m}\right)\cdot |S_{r-1}|$. This implies that the number
of rounds (including the special last round) is at most $2m\log(n/m)+1$.
Also notice that throughout
round $r$, at each request at least $|S_{r-1}|/m$ states in $S_{r-1}$ pay at least 
the cost that the algorithm pays, but less than $|S_{r-1}|/2m$ states in $S_{r-1}$
acrued a total cost of at least $1$. Thus, the average over $S_{r-1}$ increases
by at least $a_r / 2m$. Each state in $S_{r-1}\setminus S_r$ contributes
$\frac{1}{|S_{r-1}|}$ to this average, and there are $|S_{r-1}|/2m$ such states,
so removing them reduces the average by at most $\frac{1}{2m}$. Thus,
$$
o_r \ge o_{r-1} + \frac{a_r}{2m} - \frac{1}{2m}.
$$
Let $\bar{r}$ be the penultimate round. Notice that for all $r$, $o_r\le 1$. 
Then 
$$
2m \ge 2m\cdot o_{\bar{r}}\ge \sum_{r=1}^{\bar{r}} a_r - \bar{r} \ge
\sum_{r=1}^{\bar{r}} a_r - 2m\log(n/m).
$$
Thus, $\sum_{r=1}^{\bar{r}} a_r = O(m\log(en/m))$. The last phase
also costs the algorithm $O(m\log(en/m))$. This is because 
$|S_{\bar{r}}| < m\log(en/m)$, we use an $|S_{\bar{r}}|$-competitive
algorithm, and an adversary with the same initial state in this round
as the algorithm pays at most $2$---one for moving to the best initial
state, and one for serving the request sequence in the round.
\end{proof}

\begin{lemma}\label{lm: lower bound m=2}
Let ${\cal M}$ be an $n$-point uniform metric space. Then,
$c^{\rand}_{{\cal M},2} = \Omega(\log n)$.
\end{lemma}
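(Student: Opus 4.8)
The plan is to show that even with only two distinct requests, an adversary can emulate the classical uniform-metric lower bound of $\Omega(\mathcal{H}_n)$ against any randomized algorithm. Recall that the classical $\Omega(\log n)$ lower bound for uniform metrics against randomized algorithms (via Yao's principle) is driven by a request sequence that, at each step, imposes a large cost on a single ``current'' state while imposing zero cost elsewhere, forcing the algorithm's probability mass to diffuse; the optimal offline cost grows only logarithmically slower than the algorithm's. The obstacle is that a single-state ``eviction'' request is one of $n$ distinct cost vectors, so naively this uses $m=n$ requests, not $m=2$. The key idea is to \emph{simulate} an arbitrary single-state large cost using repeated applications of just two fixed requests.

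First, I would fix a small constant $\delta>0$ (say $\delta = 1/n$ or a fixed small rational) and take the request set $\mathcal{R} = \{r_0, r_1\}$ where, after relabeling states as $1,\dots,n$, we let $r_b$ assign cost $\delta$ to every state $s$ whose label has a particular bit pattern determined by $b$ and cost $0$ to the rest --- more concretely, one can use a ``thresholding'' pair: $r_0$ charges $\delta$ to states $1,\dots,\lceil n/2\rceil$ and $0$ to the rest, $r_1$ charges $\delta$ to the complementary half. By issuing these two requests in an appropriate interleaved pattern (a binary-search-like schedule of length $O(\log n)$), the \emph{cumulative} cost vector over a block of requests can be made to charge roughly $\Theta(\delta \log n)$ to a single target state and $o(\delta \log n)$, in fact at most $\delta\cdot O(1)$ extra, spread thinly elsewhere --- up to rescaling this emulates one classical ``evict state $s$'' request of unit magnitude, at the price of a bounded multiplicative loss. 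Alternatively, and more cleanly, I would build the two requests so that for \emph{each} state $s$ there is a short fixed word $w_s \in \{0,1\}^{O(\log n)}$ over the alphabet $\{r_0,r_1\}$ whose accumulated cost is concentrated on $s$; this is exactly possible if $r_0,r_1$ are chosen as the two ``characteristic halves'' and $w_s$ is the binary expansion of $s$, since the set of states still charged after reading a prefix shrinks by a factor $2$ per symbol.

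Second, I would run the classical adaptive adversary argument (equivalently, the hard distribution for Yao's principle) on top of this simulation: whenever the classical lower bound wants to issue ``charge state $s_t$,'' instead issue the length-$O(\log n)$ block $w_{s_t}$ from $\mathcal{R}$. The algorithm's cost is at least (its cost against the simulated sequence), and the offline optimum is at most $O(\log n)$ times the simulated optimum plus lower-order terms, because the offline solution can mimic the classical offline solution and pay only the thin ``leakage'' cost, which telescopes to a constant per block. Since the classical argument yields algorithm cost $\Omega(\log n)$ times offline cost, and our simulation inflates both sides by at most $O(\log n)$ in a way that cancels in the ratio up to constants --- here I must be careful that the blow-up is the \emph{same} $O(\log n)$ on both the ALG and OPT sides, or at least that their quotient degrades by only a constant --- we retain a lower bound of $\Omega(\log n)$ on the competitive ratio.

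The main obstacle is precisely controlling this two-sided blow-up: I need the simulation to be \emph{faithful} in the sense that the online algorithm genuinely cannot do better on the simulated block than on the abstract single request (so it cannot exploit the intermediate requests to hedge cheaply), while simultaneously the offline optimum on the block is not much larger than the abstract offline cost. The binary-search schedule handles the first point because every intermediate request still charges a constant fraction of the surviving states, so no online strategy escapes the diffusion; it handles the second because the offline player can sit at the (eventual) target of each block and pay only $O(\delta\log n)$ per block, matching the classical per-step offline cost up to the uniform $O(\log n)$ factor that divides out. I would present the schedule explicitly, verify the potential-function/diffusion inequality for the online cost on a single block, bound the per-block offline cost, and then invoke the classical $\Omega(\mathcal{H}_n)$ analysis as a black box on the resulting abstract sequence. (A cross-check: this also explains the matching upper bound in Theorem~\ref{thm: uniform}, $c^{\rand}_{\mathcal{M},2}=O(\log n)$, which is immediate from Theorem~\ref{thm: rand upper} since uniform metrics are $1$-level HSTs.)
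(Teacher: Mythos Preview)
Your proposal has a genuine gap at the construction step. With two \emph{fixed} cost vectors $r_0,r_1\in\RR_+^X$, the cumulative cost of any word $w\in\{r_0,r_1\}^*$ is simply $a\cdot r_0 + b\cdot r_1$, where $a$ and $b$ are the number of occurrences of each symbol. In particular, if $r_0$ charges $\delta$ to the first half of the states and $r_1$ charges $\delta$ to the second half, then after \emph{any} word the accumulated cost is constant on each half: every state $i\le\lceil n/2\rceil$ has paid $\delta a$ and every state $i>\lceil n/2\rceil$ has paid $\delta b$. There is no way to concentrate cost on a single target state; your claim that ``the set of states still charged after reading a prefix shrinks by a factor~$2$ per symbol'' is false. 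That shrinking behavior is what you get with $2\log_2 n$ requests (one per bit position and value, as in Lemma~\ref{lm: 0-infty uniform rand lower}), not with two. Consequently the whole simulation layer---encoding ``evict state $s$'' by a word $w_s$---collapses, and the black-box reduction to the classical $\Omega(\mathcal{H}_n)$ argument never gets off the ground. (Relatedly, with your $r_0,r_1$ the online algorithm can simply sit forever at whichever half has been charged less so far, paying at most half of what any fixed state pays; there is no diffusion to exploit.)

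The paper's proof sidesteps this obstruction by making the two requests carry \emph{geometrically varying} costs across the states: $r^0_i = C^{i-n}$ and $r^1_i = C^{-i-1}$ for a large constant $C$. Repeating $r^0$ (resp.\ $r^1$) enough times brings one designated boundary state up to cost~$1$ while, thanks to the factor-$C$ decay, every state to its left (resp.\ right) accrues only $O(1/C)$. This is what enables the binary-search phase structure: in round $i$ the adversary reveals a random bit of a hidden target $h$, and the algorithm is on the wrong side of the resulting dyadic interval with probability $\tfrac12$, paying $\ge 1$ in that case, while the adversary hiding at $h$ pays $O(q/C)\approx 0$ per phase. The essential missing idea in your proposal is exactly this use of many cost \emph{scales} within each of the two request vectors; without it, two requests span only a $2$-dimensional cone of achievable cumulative costs, far too coarse to isolate individual states.
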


\begin{proof}
Assume without loss of generality that $n=2^q$ is a power of $2$
(otherwise we can restrict the instance to the power of $2$ closest
to $n$; the redundant states can be blocked by assigning to them
a cost of $\infty$ in both elements of ${\cal R}$).
Let $\{0,1,\dots,n-1\}$ be an arbitrary enumeration of the points
of ${\cal M}$. Let $C = C(n)$ be a large constant.
Set ${\cal R} = \{r^0,r^1\}$, where $r^0_i = C^{i-n}$, and let
$r^1_i = C^{-i-1}$.
We use the minimax principle and generate a random request
sequence that ``beats'' every deterministic algorithm. The request
sequence is generated in phases. In each phase we choose uniformly
at random one state $h\in\{0,1,\dots,n-1\}$ where the adversary 
``hides.'' Let $h_0 h_1 \cdots h_{q-1}$ be the binary representation
of $h$. This defines a natural sequence of nested intervals in the
dyadic partition of the set of states. The interval $J_i$ consists of
all the states whose binary representation has the prefix $h_0\cdots h_i$.
For notational convenience, let $J_{-1} = \{0,1,\dots,n-1\}$.
The phase consists of $q$ rounds numbered $0,1,\dots,q-1$. 
Then, in round $i$, we request repeatedly $r^{h_i}$ until
the state adjacent to $J_i$ on the right, if $h_i = 0$, or on the left,
if $h_i = 1$, acrues a cost of at least $1$ in this round. Notice that 
the algorithm knows the entire round once its first request is given.

The analysis is completed as follows. Define 
$$\bar{J}_i = \left\{\begin{array}{ll}
                             \{j: j\le\max s\in J_i\} & h_i = 0, \\
                             \{j: j\ge\min s\in J_i\} & h_i = 1.
                             \end{array}\right.
$$
When round $i$ starts, with probability at least $\frac 1 2$, the algorithm 
occupies a state $s\not\in \bar{J}_i$. In this case the algorithm pays at
least $1$ in this round. This is clear if it moves. Otherwise, every state
not in $\bar{J}_i$ pays at least as much as the state in this set that is
adjacent to $J_i$. Thus, the expected cost of the algorithm for the entire 
phase is at least $q/2 = \log(n-1)$. On the other hand, the adversary
pays at most $1$ to move to $h$. Notice that the state
adjacent to $J_i$ that determines the stopping condition pays less than
$2$ in round $i$ (in fact it pays close to $1$). Thus, all the states in $J_i$ 
pay at most $\frac{2}{C}$. If the adversary stays at $h$ for the 
duration of the phase, it pays a service cost of at most $\frac{2q}{C}$,
which can be made arbitrarily close to $0$ by choosing $C\gg q$.
\end{proof}

\begin{lemma}\label{lm: det lower bound}
Let ${\cal M}$ be an $n$-point uniform metric space. Then,
$c^{\detr}_{{\cal M},m} = \Omega(m \log(en/m))$.
\end{lemma}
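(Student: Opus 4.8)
The plan is to establish the matching lower bound $c^{\detr}_{{\cal M},m} = \Omega(m\log(en/m))$ by constructing, for any deterministic algorithm, a request sequence drawn from a set ${\cal R}$ of size $m$ that forces the claimed cost. The natural approach mirrors the structure of the upper bound in Lemma~\ref{lm: det upper}: there, the algorithm shrinks a candidate set $S$ by a $(1-\tfrac1{2m})$ factor per round over roughly $2m\log(n/m)$ rounds, paying $\Omega(1)$ per round against an adversary who pays $O(1)$ per phase. For the lower bound I want to force exactly this behavior. I would partition the points into $m$ groups (or work with $m$ ``scales'') and design $m$ requests $r^1,\dots,r^m$ so that at any moment the adversary can play a request that is cheap on a large chunk of the current candidate set and expensive elsewhere, in a way that is geometrically nested: after the algorithm reacts, only a $(1-\Theta(1/m))$ fraction of points remain ``safe,'' and the adversary can recurse. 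Concretely, one expects each $r^k$ to assign costs that are increasing powers of a large constant $C$ across the points, arranged so that the $m$ different requests ``address'' the $m$ coordinates of a base-$(n^{1/m})$-like expansion of the hidden state, analogous to the binary expansion used in Lemma~\ref{lm: lower bound m=2} but with $m$ levels of refinement instead of $\log n$.

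The key steps, in order, are: (i) choose a hidden target state $h$ and express its index in a mixed-radix form that will be peeled off by the $m$ request types; (ii) describe the adversarial strategy as a sequence of $\Theta(m\log(n/m))$ rounds, in each of which one request $r^k$ is repeated until some designated ``boundary'' state accrues cost $1$, thereby certifying that a $\Theta(1/m)$ fraction of the still-alive candidates have each paid $\ge 1$; (iii) argue that any deterministic algorithm, not knowing $h$, must at the start of a constant fraction of these rounds occupy a state that is about to pay $\ge 1$ (or else pay $1$ to move), so it accrues $\Omega(m\log(n/m))$ total; (iv) bound the adversary's cost by $O(1)$ — it moves once to $h$ and then the geometric decay of the cost values (scaled by $C \gg m\log n$) makes its total service cost on $h$ negligible, exactly as in the $m=2$ argument. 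Finally I would note that when $m$ is large relative to $n$ the bound degrades gracefully to $\Omega(n)$, matching $m\log(en/m) = \Theta(n)$ in that regime, and this is consistent with Theorem~\ref{thm: det}.

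The main obstacle I anticipate is making step (iii) genuinely adaptive: the adversary commits to $h$ first (for the minimax argument we put a uniform distribution on $h$ and beat every deterministic algorithm), so I need the $m$-ary nested-interval structure to guarantee that, with probability $\ge \tfrac12$ over $h$, the algorithm is ``caught outside'' at the start of each round. With a single bit of refinement per round this is the clean $\Pr\ge\tfrac12$ statement of Lemma~\ref{lm: lower bound m=2}; with $m$ request types each round should shrink the live set by a factor $1-\Theta(1/m)$ rather than $1/2$, so I must either (a) run each ``$m$-ary digit'' of the expansion over $\Theta(m)$ sub-rounds of binary-style refinement, recovering the $\log(n/m)$ depth times an extra factor $m$, or (b) directly analyze an $m$-way split and show the algorithm is caught with constant probability. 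Approach (a) seems safest and gives the factor $m\log(n/m)$ transparently: use $\log(n/m)$ ``super-levels,'' and within each, cycle through the $m$ requests so that the candidate set drops geometrically while the algorithm is forced to pay $\Omega(1)$ a constant fraction of the time; summing gives $\Omega(m\log(n/m))$, and stopping the recursion once the live set has size $\le m$ (handled by a trivial $\Omega(1)$ argument there) accounts for the $en/m$ inside the logarithm. Getting the probability bookkeeping right so that the expected algorithm cost is $\Omega(m\log(en/m))$ while the worst-case adversary cost stays $O(1)$ is the delicate part; everything else is the same geometric-cost calculation already used above.
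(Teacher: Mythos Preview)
Your plan is never made concrete enough to verify, and the natural reading of the base-$n^{1/m}$ idea gives the wrong count: an $m$-ary expansion of $h$ has only $m$ digits, and even if each digit were pinned down by $\Theta(\log n^{1/m})$ binary sub-rounds, the total is $\Theta(\log n)$ rounds, not $\Theta(m\log(n/m))$. Also, in Lemma~\ref{lm: lower bound m=2} it is the \emph{choice} between the two requests that encodes a bit of $h$; with one request per digit position there is no such choice, so it is unclear how the adversary communicates anything. Your fallback (a) --- $\log(n/m)$ super-levels times $m$ requests per level --- does hit the right product, but you never say what the $m$ requests are or why the algorithm is caught a constant fraction of the time, so the ``delicate part'' you flag is really the whole argument. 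Finally, invoking minimax and a random $h$ is unnecessary for a deterministic lower bound; an adaptive adversary is both simpler and stronger here.

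The paper's proof takes a quite different and much cleaner route: it reduces directly to Lemma~\ref{lm: lower bound m=2} via a block partition. Split the $n$ states into $m/2$ disjoint blocks of size $2n/m$, and for each block install a separate copy of the two geometric requests from that lemma, with cost $0$ outside the block --- exactly $m$ requests in total. The adversary is fully adaptive: whenever the deterministic algorithm sits in block $j$, run the Lemma~\ref{lm: lower bound m=2} adversary inside block~$j$, always choosing the halving direction that catches the algorithm. The algorithm must then pay $\Omega(\log(2n/m))$ while in each of the $m/2$ blocks before every state of that block has accrued cost $\ge 1$, hence $\Omega(m\log(n/m))$ before all $n$ states are exhausted; the adversary hides at the last surviving state and pays less than~$2$ per phase. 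This buys you an explicit request set and removes all probability bookkeeping.
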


\begin{proofsketch}
Without loss of generality, we may assume that $m$ is an even number which
is at most $n/2$, that $2n$ is divisible by $m$ and that $2n/m$ is a power of $2$.
(Otherwise, eliminate some states by having their cost always $\infty$ and use
fewer requests. A constant fraction of both states and requests remains.)
So, partition the 
states into $m/2$ disjoint subsets of size $2n/m$ each. For each subset,
define two new requests as per the proof of Lemma~\ref{lm: lower bound m=2},
and outside the subset have both requests cost $0$. In each subset, use
the adversary's strategy from the proof of Lemma~\ref{lm: lower bound m=2}.
The requests according to this strategy will be used in the steps when the
algorithm occupies a state in the subset. Clearly, while the algorithm hasn't
paid at least $\Omega(\log(n/m))$ in all subsets, there's at least one state
that hasn't paid $1$. Thus, the adversary can choose the last such surviving
state and cause the algorithm to pay $\Omega(m\log(n/m))$ before no state
survives. When no state survives, a phase that cost the adversary less than
$2$ and the algorithm $\Omega(m\log(n/m))$ ends, and we can repeat the
process in a new phase.
\end{proofsketch}

\begin{proofof}{Theorem~\ref{thm: uniform}}
The theorem is a corollary of Lemmas~\ref{lm: det upper},~\ref{lm: lower bound m=2},
and~\ref{lm: det lower bound}.
\end{proofof}

\begin{proofof}{Theorem~\ref{thm: paired-uniform}}
The upper bound on $c^{\rand}_{{\cal M}, 2}$ follows from the general upper bound
on trees of Bubeck et al.~\cite{BCLL18}, and the lower bound follows from
Theorem~\ref{thm: uniform}. The lower bound on $c^{\detr}_{{\cal M},2}$
goes as follows. Denote the pairs of leaves by $i, i'$ for $i=0,1,\dots,n/2-1$. Let
$h_0,h_1,\dots,h_{n/2-1}$ be the sequence defined by $h_i = (Cn)^{i-n}$. We use
two requests: $r$ is defined by $r_i = h_i$ and $r_{i'} = 0$, for all $i$, and $r'$
is defined by $r'_i = 0$ and $r'_{i'} = h_{n-i-1}$, for all $i$. The request sequence
is simple: if the algorithm occupies a state $i$, use $r$, otherwise the algorithm
occupies a state $i'$, so use $r'$.

For analyzing the competitive ratio, partition the request sequence into rounds.
A round ends whenever the algorithm either moves from one pair to another
pair or pays $C$ while staying in one pair $\{i,i'\}$. Notice that either way, the
algorithm's cost is at least $C$ per round. Further partition the rounds into phases.
Each phase contains exactly $n/2-1$ rounds. Thus, the cost of the algorithm per
phase is at least $C(n/2-1)$. We show that the adversary can pay less than
$C + n/2$ per round. Choosing $C = \Omega(n)$ completes the proof. 
To show that the adversary's cost per phase is less than $C + n/2$, 
notice that in a
phase, the algorithm cannot visit all the pairs. Let $\{i,i'\}$ be a pair that is not
visited by the algorithm during the phase. The adversary will ``hide'' in such a pair.
Moving to this pair in the beginning of the round costs at most $C$. For each 
round in the phase, the adversary stays either at $i$ or at $i'$, so moving between
$i$ and $i'$ costs at most $n/2-1$. 

To complete the analysis, we have to explain
how the adversary chooses between $i$ and $i'$, and we have to analyze the
cost of staying at the chosen state. Notice that in a round, by definition the 
algorithm stays in one pair $\{j,j'\}$, where $i\ne j$. If $i < j$, the adversary
stays at $i$, otherwise the adversary stays at $i'$. Consider the first case
(the other case is analogous). By the definition of a round, $j$ acrues a cost
of at most $C$ during the round, because it is hit by a positive cost only when
the algorithm occupies $j$, and the algorithm pays at least this cost (it might
decide to move to $j'$ and pay $1$). Therefore, $i$ acrues a cost of at most
$C\cdot (Cn)^{i-j}\le\frac{1}{n}$. This is the service cost that the adversary
pays during the round. Thus, since there are $n/2-1$ rounds in a phase, the
total service cost of the adversary during a phase is less than $1$.
\end{proofof}

\begin{lemma}\label{lm: 0-infty uniform detr}
Let ${\cal M}$ be an $n$-point uniform metric space. Then,
$\hat{c}^{\detr}_{{\cal M},m} = \min\{m,n\}$.
\end{lemma}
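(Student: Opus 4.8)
The plan is to prove the two inequalities $\hat c^{\detr}_{{\cal M},m}\le\min\{m,n\}$ and $\hat c^{\detr}_{{\cal M},m}\ge\min\{m,n\}$ separately, the first by exhibiting an algorithm and the second by exhibiting a bad family ${\cal R}$.

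For the upper bound I would analyze the natural lazy-but-constrained algorithm, run in phases. Within a phase the algorithm maintains the set $S$ of states consistent with an offline solution that has not moved since the start of the phase: initially $S=X$, and on each request $R$ it sets $S\leftarrow S\cap R$; the phase ends at the first request for which $S$ becomes empty. Throughout a phase the algorithm keeps its state inside the current $S$ --- it does nothing when the incoming request already contains its state, and otherwise relocates to an arbitrary state of the updated $S$, which is nonempty until the phase-ending request (at the phase-ending request it relocates into $R$, charged to the next phase). Two observations bound the number of relocations in a phase. First, each relocation strictly shrinks $S$, so there are at most $n$ of them. Second --- and this is where the parameter $m$ enters --- the algorithm relocates at most once per distinct request value: after relocating in response to $R$ its state lies in $S\subseteq R$ and remains inside the (only shrinking) $S$ until the next relocation, hence it never has to move again on a later repetition of $R$. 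So at most $\min\{m,n\}$ relocations occur per phase, while the offline optimum pays at least $1$ per phase (the intersection of a completed phase's requests is empty), and the possibly unbounded final phase contributes only to the additive constant. This yields competitive ratio $\min\{m,n\}$.

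For the lower bound, set $k=\min\{m,n\}$, partition $X$ into nonempty blocks $B_1,\dots,B_k$, and take ${\cal R}=\{X\setminus B_1,\dots,X\setminus B_k\}$ (padding with arbitrary extra sets when $m>k$; when $m\ge n$ the blocks are singletons). The crucial property is that these sets cover $X$, i.e.\ $\bigcap_j(X\setminus B_j)=\emptyset$, so an adaptive adversary can always force the deterministic algorithm to relocate: whenever the algorithm sits in $B_j$, request $X\setminus B_j$. The adversary proceeds in phases; a phase runs until the algorithm has occupied all $k$ blocks, whereupon the adversary issues one last request excluding the block the algorithm currently occupies and closes the phase. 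In each phase the algorithm is forced to relocate at least $k$ times: $k-1$ relocations are needed merely to touch $k$ blocks, and before the closing request the running intersection still equals the last untouched block and is nonempty, so the closing request forces one more relocation (and if the algorithm instead revisits a block it has again relocated $\ge k$ times). The offline optimum, on the other hand, serves each phase with a single relocation: it parks in the block that is excluded last during that phase (safe from the phase's start) and leaves only at the phase boundary, which excludes a different block. Thus the algorithm pays $\ge k$ per phase while the optimum pays $1$ per phase, giving competitive ratio $\ge k=\min\{m,n\}$.

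I expect the lower bound to be the delicate part, since one must make three facts hold simultaneously rather than in isolation: (i) the adversary can always force a relocation, which is exactly the covering identity $\bigcap_j(X\setminus B_j)=\emptyset$ and is the reason the construction uses $k$ blocks and the reason restricting to $m<n$ sets caps the bound at $m$; (ii) the algorithm cannot escape by camping in an unattacked block, so it is driven through a full sweep of all $k$ blocks and then one further relocation by the closing request, for a cost of $k$ per phase; and (iii) the offline optimum genuinely gets away with one move per phase, using each phase's last-excluded block as a safe haven and moving only at phase transitions (a minor case analysis is needed when consecutive phases share this block). The upper bound, by contrast, is a routine phase argument once one isolates the ``at most one relocation per distinct request'' phenomenon.
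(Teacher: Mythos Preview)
Your proposal is correct and takes the same approach as the paper: the upper bound is exactly the paper's ``variant of the marking algorithm'' (maintain the running intersection of the phase's requests, move only when forced), and your key observation that each distinct request triggers at most one relocation per phase is precisely how the paper obtains the bound of $m$. Your lower bound---partition into $k=\min\{m,n\}$ blocks and always request the complement of the algorithm's current block---is the $k$-server lower bound the paper invokes, with your blocks playing the role of points in the paper's restricted subspace.

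The ``minor case analysis'' you flag for consecutive phases sharing the last-excluded block is a real loose end: as written, your single offline must vacate the safe block at the closing request and then return, costing two moves for that phase, so the phase accounting yields only $k/2$. The clean fix (and what the paper's $k$-server citation is implicitly using) is the standard averaging argument: maintain $k-1$ offline copies, one parked in each block other than the algorithm's current block; at every step exactly one copy is displaced and can move to the block the algorithm just left, so the total offline cost equals the algorithm's cost and the best copy pays at most a $1/(k-1)$ fraction.
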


\begin{proof}
The lower bound follows from restricting the number of states 
that can get a request with a $0$ value to $k+1 = \min\{m,n\}$ and then applying 
the deterministic $k$-server lower bound to this subspace. Notice that
this lower bound uses $k+1$ different requests.

The upper bound is achieved with a variant of the marking algorithm, as follows. 
Each request $r$ can be associated
with a subset $S_r = \{i\in X:\ r_i = 0\}$. If there exists a state $i$ that is in the
intersection of all $m$ subsets $S_r$, the algorithm can pay at most $1$ to 
move to this state, and then pay $0$ for the rest of the sequence. Otherwise,
the algorithm partitions the sequence into phases. In each phase, if the current
location of the algorithm is hit by an $\infty$ value, it moves to a state in the
intersection of all the sets seen so far in the phase. The phase ends, and a
new phase begins, when this intersection is empty. Notice that to move, the
algorithm needs to get a request not seen so far in the phase, so the algorithm
moves at most $m$ times per phase. At the end of the phase, all states were
hit by an $\infty$ value at least once, so any algorithm would have to move at
least once per phase.
\end{proof}

\begin{lemma}\label{lm: 0-infty uniform rand upper}
Let ${\cal M}$ be an $n$-point uniform metric space. Then,
$\hat{c}^{\rand}_{{\cal M},m} = O\left(\min\{m,\log n\}\right)$.
\end{lemma}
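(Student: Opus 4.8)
The plan is to produce a single phase-based randomized algorithm that is simultaneously $O(m)$- and $O(\log n)$-competitive; the two bounds then combine to $O(\min\{m,\log n\})$. (For the $O(m)$ half one could instead just invoke Lemma~\ref{lm: 0-infty uniform detr}, since $\hat{c}^{\rand}_{{\cal M},m}\le\hat{c}^{\detr}_{{\cal M},m}=\min\{m,n\}$, but it is cleaner to read both bounds off the same algorithm.) Normalize all nonzero distances of ${\cal M}$ to $1$. Each $0/\infty$ request $r$ is identified with the set $S_r=\{i\in X:\ r_i=0\}$ of states that serve it for free, and any feasible solution must sit inside $S_r$ at the step when $r$ arrives.

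The algorithm runs in phases. Within a phase it maintains the \emph{alive set} $A$, initialized to $X$ at the start of the phase and always equal to the intersection of the sets $S_r$ of the requests seen so far in the phase, and it keeps its (fractional) position uniformly distributed over $A$. When a request with set $S$ arrives, it replaces $A$ by $A\cap S$ and re-randomizes to the uniform distribution over the new $A$, moving only the probability mass that had been sitting on the states that just died; in the normalized uniform metric this costs exactly that mass, namely $(|A_{\mathrm{old}}|-|A_{\mathrm{new}}|)/|A_{\mathrm{old}}|$. If $A\cap S=\emptyset$, the current phase ends and a new one begins with $S$ as its first request (alive set $S$, uniform distribution over $S$), at movement cost at most $1$. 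Since at every step the algorithm sits inside the intersection of all requested sets of the current phase, in particular inside the current request's set, its service cost is always $0$.

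I would bound the per-phase movement cost in two ways. The alive set changes only when a request set not previously seen in the phase arrives, and there are at most $m$ distinct request sets, so there are at most $m$ such steps, each costing at most $1$: this gives $O(m)$ per phase. Alternatively, writing the successive alive-set sizes within a phase as $n=a_0>a_1>\cdots>a_\ell\ge 1$, the $j$-th shrink costs $(a_{j-1}-a_j)/a_{j-1}=\sum_{i=a_j+1}^{a_{j-1}}1/a_{j-1}\le\sum_{i=a_j+1}^{a_{j-1}}1/i$, and since these index ranges are disjoint subsets of $\{1,\dots,n\}$ their total is at most ${\cal H}_n=O(\log n)$; adding the $O(1)$ cost of the phase reset, the per-phase cost is $O(\min\{m,\log n\})$.

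It remains to lower bound the optimum. If it ever pays infinite cost the competitive inequality is trivial, so assume not. In each completed phase the request sets have empty common intersection, so no single state serves all requests of the phase for free and the optimal trajectory must change state during the phase, costing at least $1$; hence if there are $P$ phases the optimum pays at least $P-1$. Since the algorithm pays $O(P\cdot\min\{m,\log n\})$, we obtain a cost bound of $O(\min\{m,\log n\})\cdot z^*(\rho)+O(\min\{m,\log n\})$, which is the claim. The only genuinely delicate points are the bookkeeping at phase boundaries and the verification that, in a uniform metric, re-randomizing to the uniform distribution over a shrunken alive set costs exactly the mass that was on the dead states; the rest is the elementary harmonic-sum estimate above.
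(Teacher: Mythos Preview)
Your argument is correct and takes a genuinely different route from the paper. The paper's own proof is a two-line reduction: it obtains the $O(\log n)$ bound by invoking the general randomized upper bound for HSTs (Theorem~\ref{thm: rand upper}), and the $O(m)$ bound by quoting the deterministic bound of Lemma~\ref{lm: 0-infty uniform detr} together with $\hat{c}^{\rand}_{{\cal M},m}\le\hat{c}^{\detr}_{{\cal M},m}$. You instead exhibit a single explicit algorithm (the uniform-on-the-alive-set ``randomized marking'' algorithm) and read both bounds off its per-phase cost directly. Your approach is more self-contained---it avoids the heavy machinery behind Theorem~\ref{thm: rand upper}---at the price of a few more lines; the paper's approach is shorter but leans on black boxes. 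You even flag the option of citing Lemma~\ref{lm: 0-infty uniform detr} for the $O(m)$ half, which is exactly what the paper does.

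One small inconsistency to clean up: with your phase convention, the request $S$ that triggers $A\cap S=\emptyset$ is declared the \emph{first} request of the \emph{next} phase, so the requests belonging to the just-completed phase have intersection equal to the last nonempty $A$, which is \emph{not} empty. Your sentence ``In each completed phase the request sets have empty common intersection'' is therefore not literally true for your phase definition. The fix is immediate: the requests of a completed phase together with the first request of the following phase have empty intersection, so the optimum must move at least once between the start of each phase and the start of the next; these move-intervals are disjoint in time, giving opt cost at least $P-1$ for $P$ phases, exactly as you claim. Alternatively, simply assign the triggering request to the closing phase (and start the next phase with $A=X$); then your sentence holds verbatim and the harmonic-sum bound is unaffected.
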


\begin{proof}
Theorem~\ref{thm: rand upper} implies in particular an upper bound of $O(\log n)$.
Lemma~\ref{lm: 0-infty uniform detr} gives an upper bound of $m$ (because the
randomized competitive ratio is upper bounded by the deterministic competitive
ratio).
\end{proof}

\begin{lemma}\label{lm: 0-infty uniform rand lower}
Let ${\cal M}$ be an $n$-point uniform metric space. Then,
$\hat{c}^{\rand}_{{\cal M},m} = \Omega\left(\min\{m,\log n\}\right)$.
\end{lemma}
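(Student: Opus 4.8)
The plan is to prove the lower bound in two regimes — $m\le\log_2 n$, where we must obtain $\Omega(m)$, and $m>\log_2 n$, where $\Omega(\log n)$ suffices — and to combine them using the (trivial) monotonicity of $\hat{c}^{\rand}_{{\cal M},m}$ in $m$: enlarging the pool ${\cal R}$ of allowed requests, e.g.\ by padding with the always‑satisfied request $X$, cannot decrease the forced competitive ratio. Thus it is enough to establish the following claim: for every integer $\ell\ge 1$ with $2^\ell\le n$ there is a collection ${\cal R}$ of $2\ell$ subsets of an $n$‑point uniform metric space that forces randomized competitive ratio $\ge\ell/2$. Granting the claim, taking $\ell=\lfloor m/2\rfloor$ when $m\le\log_2 n$ (so that $2^\ell\le\sqrt n\le n$) yields $\Omega(m)$, while taking $\ell=\lfloor\tfrac12\log_2 n\rfloor$ and invoking monotonicity when $m>\log_2 n$ yields $\Omega(\log n)$; in both cases we get $\Omega(\min\{m,\log n\})$.

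For the construction, identify $2^\ell$ of the points with the Boolean cube $\{0,1\}^\ell$; the remaining $n-2^\ell$ points lie in no requested set and are irrelevant. For each coordinate $i\in\{0,\dots,\ell-1\}$ and bit $b\in\{0,1\}$ put $T_{i,b}=\{x\in\{0,1\}^\ell:\ x_i=b\}$, and let ${\cal R}=\{T_{i,b}\}$, so $|{\cal R}|=2\ell$. We apply the minimax principle to the following random request sequence, generated in phases. In each phase the adversary draws $h=h_0h_1\cdots h_{\ell-1}\in\{0,1\}^\ell$ uniformly at random, issues the $\ell$ requests $T_{0,h_0},T_{1,h_1},\dots,T_{\ell-1,h_{\ell-1}}$, one per step, and then begins a fresh phase with an independent $h$; the sequence runs for $P$ phases, with $P\to\infty$ to absorb the additive constant.

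The offline optimum is cheap: the adversary moves to the point $h$ at the start of a phase and stays there, and since $h\in T_{i,h_i}$ for all $i$ it pays no service cost and at most one move per phase, hence at most $P$ over the whole sequence. For the online lower bound, fix any deterministic algorithm, and within a phase let $y^{(i)}$ be its state right after serving request $i$ and $y^{(-1)}$ its state when the phase begins; its phase cost is $\sum_{i=0}^{\ell-1} d(y^{(i-1)},y^{(i)})$, and these telescope across phases to the algorithm's total movement. The key point is that $y^{(i)}$ is determined by $h_0,\dots,h_i$ (and the earlier phases), so the $i$‑th coordinate $y^{(i-1)}_i$ of $y^{(i-1)}$ depends only on $h_0,\dots,h_{i-1}$ and is therefore independent of the uniform bit $h_i$; since the algorithm must have $y^{(i)}\in T_{i,h_i}$, i.e.\ $y^{(i)}_i=h_i$, we get $\Pr[y^{(i-1)}_i\ne y^{(i)}_i]\ge\tfrac12$, and differing $i$‑th coordinates force a move. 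Hence the expected per‑phase cost of the algorithm is at least $\ell/2$, so its expected total cost is at least $P\ell/2$ while the optimum is at most $P$ on every sequence in the support; a $c$‑competitive randomized algorithm would need $P\ell/2\le cP+a$ for every $P$, forcing $c\ge\ell/2$, which proves the claim.

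The one genuinely new idea — and the only real obstacle — is recognizing that a $\{0,\infty\}$ request, i.e.\ a chased subset, can still carry one bit of a hidden $\ell$‑bit address by being the half‑cube $\{x_i=b\}$, so that only $2\ell$ distinct requests are needed to run an $\ell$‑round hiding game; this is also what pins the bound at $\log n$ rather than at $m$, since encoding an $m$‑bit address requires $2^{\Theta(m)}$ points. It is exactly this step that departs from the metrical task system lower bounds (Lemmas~\ref{lm: lower bound m=2} and~\ref{lm: det lower bound}), which compress many levels into two requests by exploiting geometrically varying finite costs, a device unavailable in the $\{0,\infty\}$ world. Everything else — the independence of $y^{(i-1)}_i$ from $h_i$, the offline hiding strategy, and the regime split with its use of monotonicity — is routine.
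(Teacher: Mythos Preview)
Your proof is correct and follows essentially the same approach as the paper: both restrict to a Boolean cube $\{0,1\}^k$ of states, use the $2k$ coordinate half-cubes as the request set, and have the adversary draw a uniformly random cube point per phase so that each of the $k$ requests costs the algorithm $\tfrac12$ in expectation while the offline hides at the safe point at cost at most $1$. The only cosmetic differences are that the paper handles the regime $m\ge n$ separately via the paging lower bound (whereas you cover it with the same cube construction plus monotonicity), and that the paper's request $r_{i,b}$ assigns cost $\infty$ to the half-cube $\{x_i=b\}$ while your $T_{i,b}$ is the $0$-cost set $\{x_i=b\}$---the same family with relabeled indices.
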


\begin{proof}
If $m\ge n$, then we get a lower bound of ${\cal H}_{n-1}$ from the paging 
problem~\cite{FKLMSY91}, which requires $n$ possible requests. Otherwise, let
$k = \left\lfloor\min\left\{m/2,\log_2 n\right\}\right\rfloor$. 
We restrict our attention to a set of size $2^k$ states. 
(If there are more, all the others have cost $\infty$ in all the requests.) We label 
the states in this set with the nodes of the binary cube $\{0,1\}^k$. We use $2k$
possible requests that are constructed as follows. There are two requests $r_{i,0}$ 
and $r_{i,1}$ corresponding to each coordinate $i=1,2,\dots,k$ of the binary cube. 
The request $r_{i,b}$ has a cost of $\infty$ at points with the $i$-th coordinate of 
their label being $b$, and $0$ at states with the complement $i$-th coordinate. We
use the minimax principle and construct a probabilistic request sequence that beats
every deterministic algorithm. The request sequence consists of phases. In each 
phase the adversary draws independently a point $b_1b_2\cdots b_k$ in the binary 
cube uniformly at random, and then requests the sequence $r_{i,b_i}$, $i=1,2,\dots,k$, 
in that order. Against any deterministic algorithm, the expected cost of each request 
in the phase is $\frac 1 2$, so the total expected cost of $\frac k 2$. The adversary 
needs to move at most once per phase, to the point labeled with the bitwise complement 
of $b_1b_2\cdots b_k$.
\end{proof}

\begin{proofof}{Theorem~\ref{thm: 0-infty uniform}}
The theorem is an immediate corollary of
Lemmas~\ref{lm: 0-infty uniform detr},~\ref{lm: 0-infty uniform rand upper},
and~\ref{lm: 0-infty uniform rand lower}.
\end{proofof}

\begin{lemma}\label{lm: 0-infty two-level detr upper}
For every two-level HST ${\cal M}$, $\hat{c}^{\detr}_{{\cal M},m} \le m2^m$.
\end{lemma}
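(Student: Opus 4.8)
The plan is to design a deterministic algorithm for a two-level HST with $m$ possible requests, achieving competitive ratio $m2^m$. Recall a two-level HST has a root, some number of level-$1$ nodes (call them ``bags''), and leaves hanging under the bags; the distance between two leaves in the same bag is $1$ (after rescaling), and the distance between two leaves in different bags is some large weight $D \gg 1$. A request $r \in {\cal R}$ is a $0/\infty$ vector on the leaves. First I would restrict attention to a single bag $B$: within $B$, the induced metric is uniform, and a request restricted to $B$ is just a subset $S_r \cap B$ of ``free'' leaves. By Lemma~\ref{lm: 0-infty uniform detr}, within any one bag the deterministic competitive ratio against $m$ requests is at most $m$. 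So the within-bag cost is controlled; the real issue is the inter-bag movement cost $D$ and how often we are forced to pay it.

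**The phase/work argument.** The approach is a nested-phase potential/charging argument. I would partition the request sequence into \emph{super-phases}, where a super-phase ends when \emph{every} bag has been ``killed'' — meaning every bag has reached a state in which it can no longer serve the sequence for free from any single leaf, i.e.\ the intersection over requests-seen-in-this-super-phase of the free-sets restricted to that bag is empty. At that point the offline optimum must pay at least $1$ during the super-phase (it either moved between leaves, paying at least $1$, or sat in one leaf and eventually paid $\infty$, which it cannot afford, so it moved and actually paid at least $1$; if it changed bags it paid $D \ge 1$). So it suffices to bound the algorithm's cost per super-phase by $O(m 2^m)$. The algorithm stays inside its current bag and runs the uniform-metric marking algorithm of Lemma~\ref{lm: 0-infty uniform detr} there, only changing bags when its current bag is killed. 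The key combinatorial point: each time a bag is killed, a \emph{new} subcollection of requests (those that appeared while inside that bag) witnesses emptiness of an intersection; and with only $m$ requests, one can bound the number of bag-changes inside a super-phase by $2^m$ — roughly, there are at most $2^m$ distinct ``subsets of requests'' and each bag-visit is charged to the family of requests seen during it in a way that prevents repetition. Each bag-visit costs $D$ to enter plus at most $O(m)$ inside (scaling so $D=1$, since inter-bag distance dominates). Hence the per-super-phase cost is $O(m \cdot 2^m)$, matching the bound.

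**The main obstacle.** The hard part will be making the bookkeeping between bag-visits tight enough to get $2^m$ rather than something larger, and in particular arguing that the algorithm need not revisit a bag with the ``same'' set of active requests — this requires carefully defining, when the algorithm re-enters a bag, which requests count as ``fresh'' and showing the relevant request-family strictly grows (or that a collision would have let the algorithm stay put). A clean way to do this: maintain for each bag the running intersection of free-sets over requests seen \emph{since the super-phase began}; when the algorithm is forced out of a bag, that intersection is empty, certified by some minimal sub-family $F$ of requests; I would argue that across the super-phase each minimal certifying family is used at most once per bag and there are at most $2^m$ of them, while the algorithm only ever sits in one bag at a time. The second subtlety is handling the offline cost lower bound cleanly when $D$ is moderate rather than astronomically large; here I would simply invoke that for the competitive-ratio definition we may take $D$ as large as we like (the supremum in $\hat c^{\detr}_{2,m}$ is over all two-level HSTs, and larger $D$ only helps the lower-bound side), or alternatively absorb the within-bag cost $O(m)$ into the $D$-per-move cost since $D \ge 1$, losing only a constant and then folding it into the $m$ factor. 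Modulo these points, the induction is: (number of bag visits per super-phase) $\le 2^m$, (cost per bag visit) $\le D + O(m) = O(mD)$ after rescaling $D=1$ to $O(m)$, hence (algorithm cost per super-phase) $= O(m2^m)$ versus offline $\ge 1$, giving $\hat c^{\detr}_{{\cal M},m} \le m 2^m$.
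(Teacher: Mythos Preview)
Your overall architecture—run the marking algorithm of Lemma~\ref{lm: 0-infty uniform detr} inside the current bag, and wrap this in an outer phase structure that bounds the number of bag changes—is exactly the paper's approach. However, there is a genuine gap in how you balance the inter-bag movement cost against the optimum.

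With the intra-bag distance normalized to $1$ and inter-bag distance $D$, your super-phase ends as soon as every bag has been hit \emph{once}. So the optimum may pay as little as $1$ per super-phase (a single intra-bag move). But your algorithm pays $D$ every time it changes bags, and it changes bags at least once per super-phase (in fact up to $m$ times, as the cumulative seen-set of request types strictly grows at each bag change). Thus your per-super-phase ratio is at least $D$, which is unbounded as the aspect ratio grows. Your two proposed fixes do not help: ``take $D$ large'' goes the wrong way for an \emph{upper} bound that must hold for every two-level HST, and ``absorb the $O(m)$ intra-bag cost into the $D$ move cost'' still leaves a factor of $D$ in the numerator with only $1$ in the denominator. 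You are also mixing scalings when you write ``scaling so $D=1$'' for the algorithm's cost while keeping ``OPT~$\ge 1$'' from the intra-bag-$=1$ scaling.

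The missing idea, which the paper supplies, is to make the algorithm \emph{stay in each bag for $C$ intra-bag phases} (where $C=D$ is the aspect ratio) before leaving, and to mark a bag only after it has been hit $C$ times. Then when the outer period ends, every bag has been hit at least $C$ times, so the optimum pays at least $C$ per period (either it stayed in one bag and paid $\ge C$ in intra-bag moves, or it changed bags and paid $\ge C$). The inter-bag transition cost of $C$ per epoch is now commensurate with the lower bound on the optimum. The $2^m$ then enters because each intra-bag phase is witnessed by some nonempty hitting subset $S\subseteq{\cal R}$, and once a fixed $S$ has been the witness $C$ times, every bag that $S$ hits is marked; hence there are at most $C(2^m-1)$ phases and at most $2^m-1$ epochs per period.
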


\begin{proof}
Let $C > 1$ denote the aspect ratio of ${\cal M}$. Let $T$ be any level-$1$ subtree
of ${\cal M}$ (so in particular $T$ indicates a uniform subspace of ${\cal M}$). We
say that $T$ is hit by a subset $S$ of requests iff for every state $s\in T$ there is
a request $r\in S$ such that $r_s = \infty$. The algorithm works in periods. In the
beginning of a period all level-$1$ subtrees are unmarked, and a period ends when
they are all marked. Each period is divided into epochs. In an epoch, the algorithm
chooses an unmarked subtree $T$ and stays in it for the entire epoch. While 
in  a subtree $T$, the algorithm runs the uniform metric procedure from the proof 
of Lemma~\ref{lm: 0-infty uniform detr}, ignoring the states outside $T$. Notice
that this divides the epoch into phases, where during a phase the algorithm
encounters one subset $S$ of requests that hits $T$, for which the algorithm
pays $|S|\le m$ and any algorithm staying in $T$ pays at least $1$. The epoch 
ends as soon as the algorithm executes $C$ phases in $T$. When the epoch ends,
the algorithm marks $T$ and any other subtree that was hit at least $C$ times
so far in the period. The cost of the algorithm in a period is upper bounded as
follows. In each phase the algorithm encounters a hitting set $S$. If the same
set is encountered $C$ times, all the subtrees that it hits get marked. Therefore,
since the empty set is not a hitting set, the total number of phases is at most
$C(2^m-1)$. The number of epochs in a period is therefore $2^m-1$, and thus
the total cost of transition between subtrees is $C(2^m-2)$ (because no transition
is required at the end of the last epoch in the period). Since the algorithm
encounters each non-empty set $S$ at most $C$ times, the total ``internal"
cost of the phases is at most $C\cdot\sum_S |S| = Cm2^{m-1}$. Therefore,
the total cost of the algorithm per period is $C(m2^{m-1}+2^m-2) < Cm2^m$,
for all $m\in\NN$. Any algorithm must pay at least $C$ per period. If it stays
in one subtree, then every subtree is hit at least $C$ times. If it moves between
subtrees, this incurs a cost of at least $C$.
\end{proof}

\begin{lemma}\label{lm: 0-infty two-level detr lower}
For all $m\in\NN$, $\hat{c}^{\detr}_{2,m} \ge {m \choose \lfloor m/2\rfloor}$.
\end{lemma}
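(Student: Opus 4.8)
The plan is to build a two-level HST and a collection of $m$ set-chasing requests on which every deterministic algorithm is forced to pay at least $\binom{m}{\lfloor m/2\rfloor}$ per phase while the adversary pays a bounded amount. First I would fix the "alphabet'' of requests: index the $m$ requests by $\{1,\dots,m\}$, and for each subset $B\subseteq\{1,\dots,m\}$ of size exactly $k := \lfloor m/2\rfloor$ create one level-$1$ subtree $T_B$; there are $N := \binom{m}{k}$ such subtrees, and inside $T_B$ I place enough leaves that the request pattern below can be realized (the leaves of $T_B$ will be indexed by, say, the elements of $B$, or by a binary code over $B$ as in Lemma~\ref{lm: 0-infty uniform detr}). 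The aspect ratio $C$ of the HST will be chosen $\gg m$ at the end. The key design feature is that request $j\in\{1,\dots,m\}$ assigns cost $\infty$ precisely to those leaves "selected by $j$'' inside every subtree $T_B$ with $j\in B$, and cost $0$ everywhere in subtrees $T_B$ with $j\notin B$. Thus a single request is "harmless'' (all-zero) on exactly those subtrees whose label-set excludes it, and inside a subtree $T_B$ the requests $\{j:j\in B\}$ behave like the uniform-metric lower-bound instance of Lemma~\ref{lm: 0-infty uniform detr}, each costing the algorithm a unit of movement and any in-subtree strategy at least $1$.

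Next I would describe the adversary's request sequence and the phase structure. The sequence is reactive: at each step, look at which subtree $T_B$ the algorithm currently occupies, and issue a request $j\in B$ that forces the algorithm off its current leaf of $T_B$ (running, within $T_B$, exactly the deterministic $k$-server / nested-interval strategy from Lemma~\ref{lm: 0-infty uniform detr}). The crucial combinatorial point is the following covering/VC-type fact: \emph{no $k$ of the $m$ requests can hit all $N$ subtrees}, because a set $S$ of requests of size $|S|\le k$ fails to hit any subtree $T_B$ with $B\cap S=\emptyset$, and since $|S|\le k\le m-k$ there is always such a $B$ (here I use $k=\lfloor m/2\rfloor\le m-k$). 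More to the point, even after the algorithm has been "chased out'' of many subtrees, as long as fewer than $N$ subtrees are fully marked there is still an unmarked subtree, and the adversary can force the algorithm to enter it and pay again. A phase ends when every one of the $N$ subtrees has been hit "enough'' (i.e., the within-subtree procedure has completed a full sub-phase there so that every leaf of every subtree has received an $\infty$); at that point any offline solution also pays at least $1$ in the phase, so the phase is chargeable. The algorithm, on the other hand, must visit $\Omega(N)$ distinct subtrees during the phase — each visit either entered via a transition of cost $\ge C$, or was exited only after paying $\ge 1$ internally — yielding algorithm cost $\ge N\cdot 1 = \binom{m}{\lfloor m/2\rfloor}$ against an adversary cost that is $O(1)$ after scaling (the adversary hides in one subtree it never visits — or visits last — and pays $\le C$ to get there plus a negligible service cost, with $C$ absorbed into the per-phase accounting exactly as in the proof of Lemma~\ref{lm: 0-infty two-level detr upper} and Theorem~\ref{thm: paired-uniform}).

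The main obstacle I anticipate is the bookkeeping that makes "the algorithm visits $\Omega(N)$ subtrees per phase, the adversary pays $O(1)$ per phase'' simultaneously true, i.e., choosing the phase boundary and the constant $C$ so that (a) every leaf of every subtree gets an $\infty$ within a phase (forcing offline cost $\ge 1$), (b) the algorithm is genuinely forced into $N$ distinct subtrees rather than being allowed to "sit out'' some of them for free, and (c) the adversary's within-subtree service cost over a whole phase stays below a small constant — this last part is exactly the geometric-sum argument from Lemma~\ref{lm: lower bound m=2} / Theorem~\ref{thm: paired-uniform}, where one lets the level-$1$ weights (and $C$) grow fast enough that the "hidden'' leaf's accumulated service cost is $o(1)$. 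I would handle the reductions — $m$ even, $N$ and subtree sizes being powers of two — by the same zeroing-out-states device used in Lemma~\ref{lm: det lower bound}, losing only constant factors, and I would remark that this also yields the weaker but cleaner bound $2^{\lfloor m/2\rfloor-1}$ quoted in Theorem~\ref{thm: 0-infty two-level} (e.g.\ by using only the $B\subseteq\{1,\dots,2k\}$ of size $k$ among a fixed $2k$ of the coordinates, or simply $2^{k-1}\le\binom{2k}{k}$).
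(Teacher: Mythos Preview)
Your construction is essentially the paper's: $N=\binom{m}{\lfloor m/2\rfloor}$ subtrees labeled by the $\lfloor m/2\rfloor$-subsets of $[m]$, subtree $T_B$ containing one leaf per element of $B$, and request $j$ putting $\infty$ on every leaf labeled $j$. But the analysis has two real gaps.

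First, your phase definition (``every leaf of every subtree has received an $\infty$'') does \emph{not} force the algorithm into $\Omega(N)$ subtrees. If the algorithm visits $T_B$ and then $T_{[m]\setminus B}$, your reactive adversary will have issued all $m$ requests, which already hits every leaf everywhere; the phase ends after two subtree visits. The VC-type remark that ``no $k$ requests hit all subtrees'' is true but does not help here. The paper instead defines a \emph{period} to end only when the algorithm has left \emph{every} subtree; this directly forces $k\ge N$ inter-subtree transitions, each costing $C$, so the algorithm pays at least $Ck$.

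Second, your adversary accounting cannot appeal to a ``negligible service cost'' via the geometric-sum trick of Lemma~\ref{lm: lower bound m=2} or Theorem~\ref{thm: paired-uniform}: those arguments use requests with small positive values, whereas here all costs are $0$ or $\infty$, so the adversary \emph{must move} within its hiding subtree and pay $1$ per move. The paper handles this by issuing, in one batch, all the requests in the label $B$ of the algorithm's current subtree. In the hiding subtree $T_{B^*}$ (the last one the algorithm visits, so $B^*\ne B$ throughout), there is always a safe leaf labeled by some element of $B^*\setminus B\ne\emptyset$; hence the adversary moves at most once per batch, i.e.\ at most $k$ times, plus one $C$-move at the period's end. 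The ratio is then at least $Ck/(k+C)$, which tends to $k\ge N$ as $C\to\infty$. Your one-request-at-a-time variant and your $N\cdot 1$ cost bound for the algorithm do not yield this balance.
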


\begin{proof}
We construct the following two-level HST with aspect ratio $C$ and request set ${\cal R}$. 
There are ${m \choose \lfloor m/2\rfloor}$ subtrees, each labeled by a different subset
of the requests of cardinality $\lfloor m/2\rfloor$. Each subtree has $\lfloor m/2\rfloor$
states labeled by the elements of the subtree's label. A request has cost $\infty$ at
all the states that are labeled by it, and $0$ otherwise. The adversary's strategy is
simple: if the algorithm is currently at a subtree labeled $S$, apend to the sequence
all the requests in $S$ (the order doesn't matter), then repeat after updating the
location that the algorithms reaches after the apended requests. In order to analyze
the competitive ratio, partition the request sequence into periods, where in each
period the algorithm leaves all the subtrees. Fix a period, and let $k$ denote the
number of times the algorithm changes a subtree in this period, including the
departure that ends the period. Clearly, $k\ge {m \choose \lfloor m/2\rfloor}$.
The algorithm pays at least $Ck$ in this period. The adversary can hide at the
last tree that is visited. At each step, no matter which subtree the algorithm
occupies, there is at least one state in the last tree that won't be hit by the apended
requests. Each time the algorithm switches trees, the adversary must move to
a new state in this last subtree. At the end of the period, the adversary may have
to move to a new subtree. So, the total cost of the adversary per period is at most
$k + C$. If we choose $C={m \choose \lfloor m/2\rfloor}$, this gives a lower bound 
of $\frac 1 2\cdot {m \choose \lfloor m/2\rfloor}$. However, we can get a lower
bound arbitrarily close to ${m \choose \lfloor m/2\rfloor}$ by choosing a larger $C$.
\end{proof}

\begin{lemma}\label{lm: 0-infty two-level rand lower}
For all $m\in\NN$, $\hat{c}^{rand}_{2,m} \ge 2^{\lfloor m/2\rfloor-1}$.
\end{lemma}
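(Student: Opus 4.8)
The plan is to mirror the construction of Lemma~\ref{lm: 0-infty two-level detr lower}, but to argue against randomized algorithms via the minimax principle, producing a random request sequence that forces any deterministic algorithm to pay $\Omega(2^{\lfloor m/2\rfloor})$ against an adversary cost of $O(1)$ per phase. Write $k = \lfloor m/2 \rfloor$. I would take a two-level HST with aspect ratio $C \gg 1$ and $2^{k}$ subtrees, each subtree a uniform metric on $k$ states. The natural encoding: label the $k$ coordinates by the pairs $\{r_{i,0}, r_{i,1}\}$ of requests (so we use $2k \le m$ requests), label each subtree by a point $x \in \{0,1\}^k$, and inside the subtree labeled $x$ put $k$ states, the $j$-th of which is ``killed'' (cost $\infty$) by request $r_{j, x_j}$ and left at cost $0$ by $r_{j, 1-x_j}$. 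Thus requesting $r_{j,b}$ kills exactly one state in each subtree whose $j$-th label-bit equals $b$, and kills no state in the other subtrees.

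The random instance proceeds in phases. In each phase the adversary picks $x \in \{0,1\}^k$ uniformly at random and hides in the subtree $x$; within the phase, each round consists of requesting, at the state currently occupied by the algorithm, the (unique) request that kills that state but does not kill the adversary's corresponding state. Concretely, if the algorithm sits at the $j$-th state of subtree $y$, the adversary issues $r_{j, y_j}$; this kills the algorithm's state, and it kills the adversary's state in subtree $x$ only if $x_j = y_j$. The key point, to be made precise, is that each request in the algorithm's ``home'' subtree $y$, when $y \ne x$, agrees with $x$ in coordinate $j$ with probability $1/2$ conditioned on the algorithm's trajectory so far — because the algorithm's moves depend only on the issued requests, which reveal nothing about the coordinates of $x$ in which the algorithm has not yet been killed in the corresponding position. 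So, roughly, the algorithm survives in its current subtree for a Geometric$(1/2)$ number of rounds, and in expectation makes progress toward exhausting all $2^k$ subtrees only by a constant factor per round; hence it incurs $\Omega(2^k)$ moves (each of cost $\ge C$, in fact of cost exactly $\ge 1$ suffices if we scale) before it is driven out of every subtree, i.e. before a phase ends. Meanwhile the adversary moves once (cost $\le C$) to enter subtree $x$ at the start of the phase, and then at most $k$ times inside the subtree; by choosing the inside-subtree distances to be $1$ and taking $C$ as the between-subtree distance, or simply by scaling so that the inside cost is negligible, the adversary pays $O(C)$ per phase. Dividing, the competitive ratio is $\Omega(2^k) = \Omega(2^{\lfloor m/2\rfloor - 1})$.

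I would carry this out in the following order: (1) fix the HST, the $2k$ requests, and the labeling; (2) describe the phase-based random adversary and note that the algorithm learns the whole current round from its first request, so WLOG it is lazy within a round; (3) set up the potential/counting argument — track the set $U$ of subtrees in which the algorithm still has an unkilled state, show $|U|$ starts at $2^k$ and the phase ends when $|U| = \emptyset$, and show that between consecutive algorithm-moves the expected decrease in $|U|$ is $O(1)$ (a single unkilled subtree is removed only when its last surviving state happens to match $x$ in the relevant coordinate, an event of probability $1/2$ independent of the past given the revealed requests); (4) conclude the algorithm makes $\Omega(2^k)$ moves in expectation per phase; (5) bound the adversary's per-phase cost by $O(C)$ and assemble the ratio, letting $C$ (or the number of phases) grow to absorb the additive constant.

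The main obstacle is step (3): making rigorous the claim that the coordinates of the hidden point $x$ in which the algorithm's position has not yet been ``tested'' remain uniform and independent conditioned on the algorithm's observed history. This requires a careful definition of which coordinate is ``tested'' at each round and an induction showing that the request sequence issued so far is a deterministic function of $x$ restricted to an appropriately defined revealed set of coordinate/value pairs, so that the un-revealed bits of $x$ are still uniform — and then that a subtree $y$ is eliminated from $U$ exactly when a freshly revealed bit of $x$ matches $y$, which given independence happens with probability $1/2$. One must also handle the bookkeeping that the algorithm may move to a subtree it has partially exhausted before, so $|U|$ is monotone but a move need not always decrease it; the cleanest fix is to charge each move either to a decrease in $|U|$ or to a ``wasted'' move inside an already-doomed subtree, and bound the latter by a separate geometric argument. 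None of these is deep, but the conditioning argument is where the proof must be written with care.
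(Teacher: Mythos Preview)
Your construction has a genuine gap: the adversary you describe is \emph{adaptive}, not oblivious. You write that ``if the algorithm sits at the $j$-th state of subtree $y$, the adversary issues $r_{j,y_j}$'' --- the request at every step is a function of the algorithm's current position. For Yao's minimax principle you need a \emph{single} distribution over request sequences such that every deterministic algorithm does poorly against it; here the induced distribution over sequences changes with the algorithm (two different deterministic algorithms, started from the same random $x$, will see different request sequences after the first move). What your argument actually bounds is the competitive ratio against an adaptive online adversary, which does not yield a randomized (oblivious) lower bound. Your ``main obstacle'' in step~(3) about the conditioning is real, but even if it were resolved perfectly the conclusion would not follow.

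The paper avoids this by a ``lifting'' that replaces your adaptive chase with a fully oblivious random sequence, at the price of a much larger tree. With $k=\lfloor m/2\rfloor$, each ordered choice of one request from every pair forms a length-$k$ \emph{sequence} labeled by $b\in\{0,1\}^{k}$; there are $2^{k-1}$ antipodal pairs of sequences. A \emph{meta-sequence} picks one sequence from each such pair and concatenates them (repeating each $C$ times), so there are $2^{2^{k-1}}$ meta-sequences. The HST has one subtree for every half-cube that meets each antipodal pair in exactly one node; the subtree's states are labeled by that half-cube. A random meta-sequence is then requested obliviously. For each of the $2^{k-1}$ sequence-slots, the two complementary choices hit complementary halves of the subtrees, so whatever subtree the algorithm occupies is hit with probability $\tfrac12$, giving expected cost $\ge C\cdot 2^{k-1}/2$ --- but now this holds against \emph{every} deterministic algorithm simultaneously, because the sequence does not depend on the algorithm. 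The adversary hides in the unique subtree missed by all chosen sequences. This oblivious pre-commitment (and the blowup from $2^{k}$ to $2^{2^{k-1}}$ subtrees that it requires) is exactly the idea your proposal is missing.
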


\begin{proof}
The argument is basically a ``lifting" of the construction in the proof of
Lemma~\ref{lm: 0-infty uniform rand lower}. As in that proof, we associate the
requests with the coordinates of the binary cube $\{0,1\}^{m/2}$ (we assume 
without loss of generality that $m$ is even, otherwise, discard one request).
Each state is labeled by a node of the cube, and request $r_{i,b}$ has cost
$\infty$ for all states with a label that has value $b$ in its $i$-th coordinate,
and cost $0$ otherwise. The requests are paired into pairs $r_{i,0},r_{i,1}$.
Notice that each request hits half of the possible state labels, and the sets
of labels that two paired requests hit are complements. Now, consider a
sequence of requests generated by choosing one request from each pair.
We can label this sequence with a point $b$ in the binary cube, where
the sequence requests $r_{i,b_i}$ for $i=1,2,\dots,m/2$. Thus, there are
$2^{m/2}$ possible such sequences. If such a sequence is requested, it
hits all the labels, except for one---the bitwise complement of $b$. Thus,
we can pair such sequences into pairs that have complement labels, and
the two sequences in a pair miss complement labels of states. There are
$2^{m/2-1}$ pairs of sequences. We can create a meta-sequence of
requests by choosing one sequence in each pair and concatenating all
the chosen sequences. This gives $2^{2^{m/2-1}}$ possible meta-sequences.
This structure is used to generate the adversary's strategy.

But, before we state the adversary's strategy, we construct the metric space
and set the labels of the states (which imply the structure of the individual
requests). The two-level HST that we use has $2^{2^{m/2-1}}$ subtrees.
A subtree is labeled by a subset of the binary cube $\{0,1\}^{m/2}$ of
cardinality $2^{m/2-1}$ (half the cube) that contains exactly one node of
each antipodal (i.e., complementary) pair of nodes. There are $2^{2^{m/2-1}}$
possible choices, hence the number of subtrees. Such a subtree has
$2^{m/2-1}$ states, each labeled with a distinct node in the label of the
subtree. Aside from the aspect ratio $C$, to be defined later, this specifies
completely ${\cal M}$ and ${\cal R}$.

We are now ready to define the adversary's strategy. As usual, we rely on
the minimax principle and define a randomized strategy that beats any
deterministic algorithm. The adversary repeats the following process.
Choose one meta-sequence uniformly at random, then loop through
its list of sequences and repeat each sequence $C$ times. Notice that
regardless of the state of the algorithm at the beginning of any sequence
of the meta-sequence, it pays at least $C$ for this sequence with probability
at least $\frac 1 2$. This is because each sequence hits half the subtrees,
and the two possible choices for the sequence hit complement sets of
subtrees: a sequence labeled $b$ hits all subtrees that contain $b$,
a sequence labeled $\bar{b}$ hits all subtrees that contain $\bar{b}$,
and every subtree contains either $b$ or $\bar{b}$, regardless of $b$.
If the sequence hits the subtree where the algorithm is located, it either
stays there and pays at least $1$ for each time the sequence is requested
(because it hits all the states in that subtree), or it moves to a different tree
and pays $C$ for the transition. On the other hand, after requesting the
entire meta-sequence, there exists one subtree that was not hit even
once. This is the subtree that is labeled with the set of complements of
all the labels of the sequences in the meta-sequence. The adversary can
hide in that subtree. It pays $C$ to move there at the start of the current
meta-sequence, and at most $1$ to move to a safe state at the start of
each sequence in the meta-sequence. So, the expected cost of the
algorithm per meta-sequence is $C2^{m/2-1}$, and the total cost of the
adversary per meta-sequence is $C + 2^{m/2-1}$. As $C$ grows, the
ratio approaches $2^{m/2-1}$.
\end{proof}

\begin{proofof}{Theorem~\ref{thm: 0-infty two-level}}
The theorem is a corollary of Lemmas~\ref{lm: 0-infty two-level detr upper}
and~\ref{lm: 0-infty two-level rand lower}.
\end{proofof}

\section{Discussion}\label{sec: discussion}

\paragraph{Open problems.}
Our work initiates the study of metrical task systems (MTS) and metrical service systems (MSS)
parametrized by the number of distinct requests. Roughly speaking (and somewhat 
surprisingly) we find that this restriction has little effect in general for MTS, 
in the sense that beyond uniform metric spaces, the achievable competitive 
ratio with $m=2$ is already asymptotically as bad as for $m=\infty$ 
(Theorem~\ref{thm: paired-uniform}). In fact, as far as randomized algorithms are 
concerned this is already true for uniform metric spaces (Theorem~\ref{thm: uniform}). 
On the other hand, the situation for MSS is very different. A number 
of questions remain open regarding MSS. A particularly interesting 
qualitative open problem would be to characterize the class of infinite size metric spaces 
for which there exist online MSS algorithms with finite competitiveness for fixed $m$,
say even for $m=3$. For example, is it possible to obtain a finite competitive ratio for 
chasing $3$ arbitrary sets on the real line? Another intriguing quantitative open problem 
is to determine if the deterministic competitive ratio on a weighted star metric is linear
or exponential in $m$. (A simple argument bounds the deterministic competitive ratio
by $O(2^m)$ and the randomized competitive ratio by $O(m)$. A lower bound of $\Omega(m)$
in both cases clearly follows from Theorem~\ref{thm: 0-infty uniform}.)

\paragraph{Deeper trees.}
Consider MSS on an HST. Label each leaf by a binary vector of dimension $m$ indicating
which requests hit it. Clearly, if there are two identically labeled leaves that share a
parent, we can eliminate one from consideration. Similarly, label each internal node by the
set of labels of its children. If there are two identically labeled internal nodes with the same
label and the same parent, we can eliminate one. 
Thus, effectively, the maximum number $T_L$ of distinct $L$-level
HSTs satisfies: $T_0 < 2^m$ and $T_L < 2^{T_{L-1}}$ (the reason for the strict inequality is
that we can eliminate the empty tree and also every leaf labeled by the all-ones vector).
The maximum number of leaves $N_L$ of an $L$-level HST therefore satisfies 
$N_L\le \prod_{i=0}^{L-1} T_i$ (in fact, this estimate is far from tight for large $L$).
Notice that $\log_2 N_L = \sum_{i=0}^{L-1} \log_2 T_i = O(\log_2 T_{L-1})$.
Theorem~\ref{thm: 0-infty two-level} can be generalized to any number of levels $L\ge 2$.
Let $c_1 = \lfloor m/2 \rfloor$ and for $L\ge 2$ let $c_L = 2^{c_{L-1}-1}$. Then, for $L\ge 1$,
$$
\Omega(c_L) \le \hat{c}^{\rand}_{L,m} \le O(\log N_L).
$$
The upper bound follows from the above observations and previous work 
(Theorem~\ref{thm: rand upper}). 
As for the lower bound, in the proof of Lemma~\ref{lm: 0-infty two-level rand lower} we 
constructed $2^{2^{m/2-1}}$ one-level subtrees, each labeled with a subset of the
binary cube $\{0,1\}^{m/2}$ of cardinality $2^{m/2-1}$, and also $2^{2^{m/2-1}}$ 
meta-sequences. Each meta-sequence misses one subtree. The subtrees can be
paired into complement halves of the binary cube, and the meta-sequences 
can be paired into complement choices of the sequences that compose them. 
Complement meta-sequences miss complement subtrees. Thus, we can ``lift''
this construction just as we ``lifted'' the uniform metric construction to get a 
$2^{2^{m/2-1}-1}$ randomized lower bound for three-level HSTs, and this ``lifting''
can be iterated ad infinitum. 

For three or more levels, we do not know reasonably tight upper and lower bounds
on the deterministic competitive ratio. We leave this as an open problem. For two 
levels, the upper and lower bounds of Theorem~\ref{thm: 0-infty two-level} are 
similar but not tight asymptotically in $m$. Thus, deriving asymptotically tight bounds, 
and moreover determining if the deterministic and randomized competitive ratios are 
asymptotically (in $m$) equal (as is the case for uniform metric spaces), are interesting 
open problems.

\paragraph{Leaky randomization.}
Generally our results show that the characteristics of the metric spaces have a strong 
influence on the type of guarantees one can hope for, in stark contrast with the well-known 
results and conjectures for non-parametrized sequences (e.g., randomized MTS, or the 
$k$-server problem). Moreover our lower bounds constructions also shed a new light on 
randomization, in the following sense.
In online computing
randomness may help because it hides the state of the algorithm. An adversary
generating the worst-case sequence for a given algorithm knows the probability
distribution but not the outcome of the algorithm's coin flips. A natural question
in this context is to quantify this phenomenon. In particular, consider an adversary
that is given at each step $t$ a signal $\sigma_t$ indicating some information on
the algorithm's state $s_t$ (which is a random variable). Suppose that there exists
$b$ such that for all $t$, the mutual information is $I(\sigma_t; s_t) = b$. If the
adversary is allowed to choose the signals subject to this constraint, what can we
say about the competitive ratio? Before we proceed, two comments are in place.
Firstly, by ``algorithm's state" we could mean simply the position reached by the
algorithm in the state space, or we could mean more broadly also the internal
state of the algorithm. The distinction is immaterial for our results. Secondly,
notice that such an adversary is restricted even from having perfect recall, because
past requests may reveal more than $b$ bits of information regarding the algorithm's
current state. So, we denote the best competitive ratio against the above adversary 
by $c^b_{\cal M}$. In every metric
space ${\cal M}$, the deterministic lower bound in Theorem~\ref{thm: det} implies that
$c^{\lceil\log n\rceil}_{\cal M} = 2n-1$. Notice that $c^{\detr}_{{\cal M},2^b}$ reveals 
something about $c^b_{\cal M}$. In particular, in paired-uniform metric spaces, 
revealing a single bit is sufficient to force the algorithm to pay asymptotically as 
much as a deterministic algorithm.
\begin{corollary}\label{cor: information lb}
If ${\cal M}$ is a paired-uniform metric space, then
$$c^1_{\cal M} = \Omega(n).$$
\end{corollary}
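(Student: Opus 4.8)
The plan is to derive this directly from the deterministic lower bound for paired-uniform metric spaces established in Theorem~\ref{thm: paired-uniform}, namely $c^{\detr}_{{\cal M},2} = \Omega(n)$, by observing that the two-request adversary constructed in that proof is in fact an adversary that uses only one bit of information about the algorithm's state at each step. First I would recall the structure of that lower bound: the request set is ${\cal R} = \{r, r'\}$, and the adversary's rule is purely ``if the algorithm currently occupies a state of the form $i$, issue $r$; if it occupies a state of the form $i'$, issue $r'$.'' In other words, the only information the adversary extracts from $s_t$ is the single bit $\beta(s_t)\in\{0,1\}$ recording whether $s_t$ lies in the ``left'' half $\{0,1,\dots,n/2-1\}$ or the ``right'' half $\{0',1',\dots\}$ of the paired-uniform space.

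Next I would make the information-theoretic accounting precise. Set $\sigma_t = \beta(s_t)$. Since $\sigma_t$ is a deterministic function of $s_t$ taking values in a two-element set, we have $I(\sigma_t; s_t) = H(\sigma_t) \le 1$ for every $t$, so this adversary is admissible for the class defining $c^1_{\cal M}$ (with $b = 1$; and if one insists on $I(\sigma_t;s_t)$ being exactly $1$ rather than at most $1$, the same construction works after a trivial symmetrization, or one simply notes that the definition tolerates $I(\sigma_t;s_t)\le b$, since a weaker signal can only help the algorithm). The key point, which I would state explicitly, is that the entire analysis in the proof of Theorem~\ref{thm: paired-uniform} — partitioning the request sequence into rounds and phases, charging the algorithm at least $C$ per round over $n/2-1$ rounds per phase, and exhibiting an unvisited pair $\{i,i'\}$ in which the adversary hides at total cost less than $C + n/2$ per phase — never uses anything about the algorithm beyond the bit $\beta(s_t)$. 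Hence the same computation gives a competitive ratio $\Omega(n)$ against the online algorithm, now interpreted as a lower bound on $c^1_{\cal M}$ rather than on $c^{\detr}_{{\cal M},2}$. Taking $C = \Theta(n)$ as in that proof yields $c^1_{\cal M} = \Omega(n)$.

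I expect essentially no obstacle here; the content is entirely in recognizing that the earlier construction already has the required one-bit-of-information property, so the proof is a one-paragraph reduction. The only mild subtlety worth a sentence is the distinction flagged in the text between ``algorithm's state = position in the metric space'' and ``algorithm's state = full internal configuration'': since $\beta(s_t)$ is a function of the position alone, and the position is itself a (possibly randomized) function of the internal state, the data-processing inequality gives $I(\sigma_t; (\text{internal state})_t) \ge I(\sigma_t; s_t)$ in the wrong direction — but what we need is the other bound, $I(\sigma_t; s_t) \le 1$, which holds unconditionally because $\sigma_t$ is binary, so the reduction is valid under either interpretation. I would close by noting, as the paper does in the surrounding discussion, that this is the instantiation of the general observation $c^{\,b}_{\cal M} \ge c^{\detr}_{{\cal M},2^b}$ at $b = 1$ in the paired-uniform case.
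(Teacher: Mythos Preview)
Your proposal is correct and follows essentially the same approach as the paper: both derive the corollary from Theorem~\ref{thm: paired-uniform} by observing that the two-request adversary in that lower bound only needs $\log_2 2 = 1$ bit of information about the algorithm's state per step, so it is an admissible $b=1$ adversary. Your version is more explicit (naming the bit $\beta(s_t)$ and checking $I(\sigma_t;s_t)\le H(\sigma_t)\le 1$), whereas the paper states the general principle that $\log_2 m$ bits suffice whenever $c^{\detr}_{{\cal M},m}$ is large and then invokes Theorem~\ref{thm: paired-uniform} with $m=2$; but the content is the same.
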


\begin{proofof}{Corollary~\ref{cor: information lb}}
If $c^{\det}_{{\cal M},m} = \Omega(n)$, then in order to force an algorithm
to a competitive ratio of $\Omega(n)$, all that the adversary needs to know
is which of the $m$ requests to use at each step. For this $\log_2 m$ bits
of information are sufficient, and this is clearly an upper bound on the
mutual information. Thus, the corollary follows from Theorem~\ref{thm: paired-uniform}.
\end{proofof}


\bibliographystyle{plain}
\bibliography{biblio}

\end{document}